\newcommand{\N}{\mathbb{N}}
\newcommand{\R}{\mathbb{R}}
\newcommand{\C}{\mathcal{C}}
\newcommand{\F}{\mathbb{F}}
\newcommand{\A}{\mathcal{A}}
\newcommand{\Cd}{\mathcal{C}^\perp}
\newcommand{\mat}{\F_q^{ \, n \times m }}
\newcommand{\maxrk}{\mbox{maxrk}}
\newcommand{\dmin}{\mbox{d}}
\newcommand{\rank}{\mbox{rank}}
\newcommand{\rowsp}{\mbox{rowsp}}
\newcommand{\colsp}{\mbox{colsp}}
\newcommand{\Mat}{\mbox{Mat}}
\theoremstyle{definition}
\newtheorem{example}{Example}[section]
\newtheorem{definition}[example]{Definition}
\newtheorem{proposition}[example]{Proposition}
\newtheorem{lemma}[example]{Lemma}
\newtheorem{theorem}[example]{Theorem}
\newtheorem{remark}[example]{Remark}
\newtheorem{notation}[example]{Notation}
\declaretheoremstyle[
mdframed={
        backgroundcolor=lightgray,
        linecolor=lightgray}
]{colored}
\declaretheoremstyle[
 mdframed={
        backgroundcolor=lightgray,
        linecolor=lightgray}
]{colored}
\begin{document}

\title{Quasi optimal anticodes: structure and invariants}
\author{Elisa Gorla and Cristina Landolina}
\affil{Institut de Math\'ematiques, Universit\'e de Neuch\^atel, Switzerland}
\date{}

\maketitle

\abstract 
It is well-known that the dimension of optimal anticodes in the rank-metric is divisible by the maximum $m$ between the number of rows and columns of the matrices. Moreover, for a fixed $k$ divisible by $m$, optimal rank-metric anticodes are the codes with least maximum rank, among those of dimension $k$. In this paper, we study the family of rank-metric codes whose dimension is not divisible by $m$ and whose maximum rank is the least possible for codes of that dimension, according to the Anticode bound. As these are not optimal anticodes, we call them quasi optimal anticodes (qOACs). In addition, we call dually qOAC a qOAC whose dual is also a qOAC. We describe explicitly the structure of dually qOACs and compute their weight distributions, generalized weights, and associated $q$-polymatroids.

\section*{Introduction and motivation}

Rank-metric codes are linear spaces of matrices of given size, with entries in a finite field $\mathbb{F}_q$. The set of $n\times m$ matrices over a field is a metric space when endowed with the rank metric: The distance between two matrices is the rank of their difference. Rank-metric codes were introduced by Delsarte in \cite{Del}. Gabidulin \cite{Gab} and Roth \cite{Roth} later independently rediscovered the family of rank-metric codes, which are linear over the field extension $\mathbb{F}_{q^m}$.  
The study of rank-metric codes is motivated by their applications to linear network coding, code-based cryptography, and distributed storage. %wireless communications \cite{Wireless},
%see e.g.  \cite{KschKoe}, \cite{GabCrypto}, 

The mathematical theory of rank-metric codes includes the study of code invariants and their bounds. Beyond the size of its elements, also called codewords, and its dimension, two basic invariants of a rank-metric code are its minimum distance and maximum rank. The minimum distance of a nonzero rank-metric code is the least rank of a nonzero matrix in the code, while its  maximum rank is the maximum rank of an element of the code.
The Singleton bound involves the dimension of a rank-metric code and its minimum distance, while the Anticode bound involves the dimension of a code and its maximum rank. Codes which meet the Singleton bound go under the name of Maximum Rank Distance codes (MRD codes) and were first studied by Delsarte in \cite{Del}. Codes attaining the Anticode bound were studied by Meshulam in \cite{MES}, with a different motivation and terminology. Within coding theory, they are known as optimal anticodes. While MRD codes are of direct applied interest, optimal anticodes are mostly of theoretical interest. For example, they can be used to define the generalized weights, a family of code invariants. This was done in \cite{ACapproach}, where the properties of optimal anticodes are studied.

For a fixed minimum distance $d$, MRD codes are the rank-metric codes which have minimum distance $d$ and the largest dimension according to the Singleton bound. Conversely, one could fix a dimension $k$ and ask what are the codes of dimension $k$ and largest minimum distance. It turns out that, if $k$ is divisible by the maximum between $n$ and $m$, these are the MRD codes. Else, they are a different family of codes, called  quasi MRD codes. This last family of codes was introduced and studied in \cite{qMRD}.

Something similar happens for the Anticode bound: The optimal anticodes are the rank-metric codes of largest dimension, among those with a fixed maximum rank. Similarly to the case of MRD codes, the dimension of optimal anticodes is divisible by the maximum between $n$ and $m$, say $m$. Moreover, for a fixed $k$ divisible by $m$, optimal anticodes are the codes with least maximum rank, among those of dimension $k$. In this paper, we study the family of rank-metric codes whose dimension is not divisible by $m$ and whose maximum rank is the least possible for codes of that dimension, according to the Anticode bound. As these are not optimal anticodes, we call them quasi optimal anticodes (qOACs).

The paper is organized as follows: In Section \ref{SecqOACs} we define qOACs and dually qOACs. We give a complete structural classification of dually qOACs and a partial one of qOACs. In Section \ref{Secgenweights} we study the generalized weights of dually qOACs and of a family of qOACs. In Section \ref{SecRankdistr} we study their the weight distribution, while in Section \ref{SecPolymat} we compute the associated $q$-polymatroids.

\section{Preliminaries}

In this section we give an introduction to rank-metric codes and state some results that we will need throughout the paper. 
Let $q$ a prime power and $\mat$ the linear space of $n \times m$ matrices with entries in the finite field $\F_q$. Up to transposition, we may assume without loss of generality that $n \leq m$.
Throughout the paper, we denote by $e_1,\ldots,e_\ell$ the canonical basis of $\F_q^\ell$, i.e., $e_i$ is the vector whose coordinates are all zero, except for a one in position $i$.

\begin{definition}
The \textbf{rank distance} of $A, B \in \mat$ is given by the function 
 \begin{align*}
 d :  \mat  \times \mat & \longrightarrow  \N \\
      (A,B)     \qquad        & \longmapsto      \rank(A-B). \\
\end{align*}
A \textbf{rank-metric code} $\mathcal{C}$ is a $\F_q$-linear subspace of $\mat$ equipped with the rank distance. The \textbf{dimension} of $\mathcal{C}$ over $\F_q$ is denoted by $\dim(\mathcal{C})$. Further, we let $\dmin(\mathcal{C}) = \min \{ \rank(A) : A \in \mathcal{C} , A \neq 0 \}$ be the \textbf{minimum distance} of a nonzero rank-metric code $\mathcal{C}$. In a similar way, the \textbf{maximum rank} is given by $\maxrk(\mathcal{C}) = \max \{ \rank(A) : A \in \mathcal{C}\}. $ 
\end{definition}

Two rank-metric codes are equivalent if there is a linear rank-preserving homomorphism mapping one code into the other. 

\begin{definition}
An \textbf{$\F_q$-linear isometry} $\varphi$ of $\mat$ is an $\F_q$-linear rank-preserving homomorphism, i.e. $\rank(\varphi(A)) = \rank(A)$ for all $A \in \mat$. Let $\mbox{Isom}_{\F_q}(\mat)$ be the collection of $\F_q$-linear isometries of $\mat$. Two rank-metric codes $\mathcal{C}, \mathcal{D} \subseteq \mat$ are \textbf{equivalent} if there is $\varphi \in \mbox{Isom}_{\F_q}(\mat)$ such that $\mathcal{C} = \varphi(\mathcal{D})$. We denote the equivalence by $\mathcal{C} \sim \mathcal{D}$.
\end{definition}

Linear isometries of $\mat$ are classified in \cite{Hua} for odd characteristic and in \cite{Wan} for characteristic equal to 2. 

\begin{theorem}
Let $\varphi \in \mbox{Isom}_{\F_q}(\mat)$. Then 
\begin{itemize}
    \item[(a)] if $n \neq m$, there exist $N \in \mbox{GL}_n(\F_q)$ and $M \in \mbox{GL}_m(\F_q)$ such that $\varphi(A) = NAM$ for all $A \in \mat$.
    \item[(b)] if $n = m$, there exist $N,M \in \mbox{GL}_n(\F_q)$ such that $\varphi(A) = NAM$ for all $A \in \mat$, or $\varphi(A) = NA^tM$ for all $A \in \mat$.
\end{itemize}
\end{theorem}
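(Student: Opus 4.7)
The plan is to reduce everything to understanding how $\varphi$ acts on rank-one matrices. Every rank-one matrix has the form $uv^t$ with $u \in \F_q^n \setminus \{0\}$ and $v \in \F_q^m \setminus \{0\}$, and the map $(u,v) \mapsto uv^t$ factors through the Segre product $\mathbb{P}(\F_q^n) \times \mathbb{P}(\F_q^m)$. Since $\varphi$ is $\F_q$-linear and rank-preserving, it induces a bijection on the set of rank-one matrices.

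The key structural observation is: for nonzero $u_1,u_2 \in \F_q^n$ and $v_1,v_2 \in \F_q^m$, one has $\rank(u_1v_1^t + u_2v_2^t) \leq 1$ if and only if $\langle u_1\rangle = \langle u_2\rangle$ or $\langle v_1\rangle = \langle v_2\rangle$. Consequently, the maximal $\F_q$-subspaces of $\mat$ consisting entirely of matrices of rank at most one are exactly the ``row blocks'' $R_u = \{uv^t : v \in \F_q^m\}$ of dimension $m$ and the ``column blocks'' $C_v = \{uv^t : u \in \F_q^n\}$ of dimension $n$. Since $\varphi$ is linear and rank-preserving, it must permute these maximal subspaces. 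A dimension count then forces the dichotomy: either $\varphi$ sends every $R_u$ to some $R_{u'}$ and every $C_v$ to some $C_{v'}$, or $n=m$ and $\varphi$ swaps the two families. In case (a), where $n \neq m$, only the first possibility can occur.

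In the non-swapping case, $\varphi$ induces bijections $\bar f$ of $\mathbb{P}(\F_q^n)$ and $\bar g$ of $\mathbb{P}(\F_q^m)$ with $\varphi(R_u) = R_{\bar f(u)}$ and $\varphi(C_v) = C_{\bar g(v)}$, so that $\varphi(uv^t) \in R_{\bar f(u)} \cap C_{\bar g(v)} = \langle \bar f(u)\bar g(v)^t\rangle$. Using $\F_q$-linearity of $\varphi$, together with the bilinearity of the outer product, one promotes the projective maps $\bar f, \bar g$ to $\F_q$-linear lifts $f, g$ (absorbing the scalar ambiguity), obtaining invertible matrices $N \in \mbox{GL}_n(\F_q)$ and $M \in \mbox{GL}_m(\F_q)$ with $\varphi(uv^t) = Nuv^tM$ on all rank-one matrices. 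Since rank-one matrices span $\mat$, linearity gives $\varphi(A) = NAM$ for every $A$. In the swapping case (which forces $n=m$), the same argument applied to the isometry $A \mapsto \varphi(A^t)$ reduces to the previous case and produces the transpose form.

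The main obstacle is the rigidity step: upgrading the data $\varphi(uv^t) \in \langle \bar f(u)\bar g(v)^t \rangle$ to an actual tensor-product factorization $\varphi(uv^t) = f(u)g(v)^t$ with $f,g$ genuinely $\F_q$-linear. This does not follow from $\F_q$-linearity of $\varphi$ alone and ultimately relies on an application in the spirit of the Fundamental Theorem of Projective Geometry over $\F_q$; handling the small-field degeneracies (in particular $q=2$, where projective lines carry only three points) is the delicate point, and is the reason why Hua and Wan treat odd characteristic and characteristic two separately in \cite{Hua,Wan}.
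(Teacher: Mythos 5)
The paper itself offers no proof of this theorem: it is quoted as a classical result and attributed to Hua (odd characteristic) and Wan (characteristic $2$), so there is no internal argument to compare yours against. Your skeleton is the standard one for such classifications --- reduce to rank-one matrices, identify the two families of maximal subspaces of matrices of rank at most one, namely $R_u=\{uv^t : v\in\F_q^m\}$ of dimension $m$ and $C_v=\{uv^t : u\in\F_q^n\}$ of dimension $n$, and show $\varphi$ permutes them --- and those steps are essentially right. One caveat: when $n=m$ a ``dimension count'' cannot rule out $\varphi$ sending some $R_u$'s into the $R$-family and others into the $C$-family. What forces the all-or-nothing dichotomy is the intersection pattern: two distinct members of the same family meet only in $0$, whereas $R_u\cap C_v=\langle uv^t\rangle$ is always one-dimensional, so injectivity of $\varphi$ prevents mixing. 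You should replace the dimension count by this argument in the square case.

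The genuine gap is the rigidity step, which you do not carry out: you assert that upgrading $\varphi(uv^t)\in\langle \bar f(u)\bar g(v)^t\rangle$ to $\varphi(uv^t)=(Nu)(v^tM)$ ``does not follow from $\F_q$-linearity of $\varphi$ alone'' and must invoke the Fundamental Theorem of Projective Geometry, with delicate small-field cases. This is a misdiagnosis, and as written it leaves the theorem unproved. Because $\varphi$ is $\F_q$-linear on all of $\mat$ (unlike in Hua's geometry of matrices, where one only assumes an adjacency-preserving bijection and FTPG together with the characteristic split genuinely enter), the lift is a short direct computation. Fix a nonzero $v_0$: the linear injection $u\mapsto\varphi(uv_0^t)$ has image $\varphi(C_{v_0})=C_{v_0'}$, so $\varphi(uv_0^t)=(Nu)(v_0')^t$ for some $N\in\mbox{GL}_n(\F_q)$, and intersecting with $\varphi(R_u)$ shows $\varphi(R_u)=R_{Nu}$ for every $u\neq 0$. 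Hence for each nonzero $u$ one may write $\varphi(uv^t)=(Nu)\,g_u(v)^t$ with $g_u$ linear in $v$. For linearly independent $u_0,u_1$, expanding $\varphi((u_0+u_1)v^t)$ by linearity and using that a matrix in $R_{Nu_0}\oplus R_{Nu_1}$ has a unique decomposition gives $g_{u_0}=g_{u_0+u_1}=g_{u_1}$, while $g_{\lambda u}=g_u$ is immediate; so all $g_u$ coincide with a single linear bijection, yielding $M\in\mbox{GL}_m(\F_q)$ with $\varphi(uv^t)=Nuv^tM$, and the rank-one matrices span $\mat$. This works uniformly in $q$ and in the characteristic; the transpose case for $n=m$ reduces to it exactly as you say. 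With the intersection argument and this computation inserted, your proof closes; without them it does not.
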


Further we define the dual of a rank-metric code $\mathcal{C}$ using the standard scalar product for matrices.

\begin{definition}
Let $\mathcal{C} \subseteq \mat$ be a rank-metric code and let $\mbox{Tr}(A)$ denote the trace of a square matrix $A$. The \textbf{dual} of $\mathcal{C}$ is defined as
$$\Cd = \{M \in \mat : \mbox{Tr}(MN^t) = 0 \mbox{ for all } N \in \mathcal{C}  \}.$$
\end{definition}

For a matrix $M\in \mat$  $\colsp(M)\subseteq \F_q^n$ denotes the $\F_q$-linear space generated by the columns of $M$ and $\rowsp(M) \subseteq \F_q^m$  the $\F_q$-linear space generated by the rows of $M$.

\begin{definition}\label{supportspace}
Let $V \subseteq \F_q^n$ be a subspace. Then
$$\Mat(V) = \{ M \in \mat : \colsp(M) \subseteq V \} \subseteq \mat$$
denotes the \textbf{matrix space supported by the vector space $V$}. For $\mathcal{C} \subseteq \mat$ a rank-metric code, let
$$\mathcal{C}(V) = \{ M \in \mathcal{C} : \colsp(M) \subseteq V \} \subseteq \mathcal{C}$$
be the \textbf{subcode of $\mathcal{C}$ supported on $V$}.
\end{definition}

Notice that whenever we want to refer to the matrix space supported by a certain row space we will consider the transposed version of the support given in Definition \ref{supportspace}. 
\medskip

An important role in the motivation of this paper is taken by deriving bounds on rank-metric codes. In the sequel we give two of the most relevant inequalities, one relating the minimum distance and one relating the maximum rank to the dimension of a rank-metric code. 

\begin{theorem}(Singleton bound)
Let $\mathcal{C} \subseteq \mat$ be a rank-metric code. Then
\begin{equation} \label{SB}
\dim(\mathcal{C}) \leq m (n - \dmin(\mathcal{C}) + 1).
\end{equation}
\end{theorem}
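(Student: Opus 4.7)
The plan is to mimic the classical Singleton bound proof from the Hamming metric, replacing coordinate puncturing with row puncturing. Write $d = d(\mathcal{C})$ and consider the $\F_q$-linear map
$$\pi : \mat \longrightarrow \F_q^{(n-d+1) \times m}$$
that sends a matrix to the submatrix formed by its last $n-d+1$ rows. The codomain has $\F_q$-dimension $m(n-d+1)$, which matches the right-hand side of (\ref{SB}), so it suffices to show that $\pi$ restricts to an injective map on $\mathcal{C}$.

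For injectivity, suppose $A \in \mathcal{C}$ lies in $\ker(\pi)$. Then all but possibly the first $d-1$ rows of $A$ vanish, so $\rowsp(A)$ is generated by at most $d-1$ vectors and hence $\rank(A) \le d-1$. Since $d$ is by definition the minimum rank of any nonzero element of $\mathcal{C}$, this forces $A = 0$. Combining injectivity with the rank-nullity theorem gives $\dim(\mathcal{C}) \le \dim \F_q^{(n-d+1) \times m} = m(n-d+1)$, which is precisely (\ref{SB}).

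There is really no significant obstacle here: the only subtlety is choosing to delete rows rather than columns. Under the paper's standing convention $n \le m$, deleting rows is the correct choice because it yields the factor $m$ (the larger dimension) on the outside and the term $n - d + 1$ involving the minimum distance on the inside; deleting columns instead would give the weaker bound $\dim(\mathcal{C}) \le n(m - d + 1)$, valid only when $d \le n$ and not tight in general. One could equivalently phrase the argument in terms of the column space: any $A$ with $\rank(A) \le d-1$ has $\colsp(A)$ contained in a subspace of $\F_q^n$ of dimension at most $d-1$, so after a rank-preserving change of basis we may assume its last $n-d+1$ rows vanish, making the projection argument transparent.
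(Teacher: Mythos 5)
Your proof is correct: the projection onto the last $n-\dmin(\mathcal{C})+1$ rows is injective on $\mathcal{C}$ for exactly the reason you give, and this is the standard puncturing argument for the rank-metric Singleton bound. The paper itself offers no proof to compare against --- it states the bound as a known result of Delsarte \cite{Del} --- so there is nothing to reconcile; your argument (and your side remark that deleting rows rather than columns is what produces the stronger bound $m(n-\dmin(\mathcal{C})+1)$ under the convention $n\leq m$) is sound and complete.
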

The \textbf{Singleton-like bound} for rank-metric codes was presented by Delsarte in \cite{Del}. It is the rank-metric version of the well-known Singleton bound in the Hamming metric. Codes meeting bound (\ref{SB}) are known as \textbf{Maximum Rank Distance Codes (MRD codes)} and have been extensively studied. 

We shall present now an upper bound on the dimension involving the maximum rank, instead of the minimum distance as seen in (\ref{SB}).  
A classical theorem by Flanders in \cite{FLA} states that the dimension of a linear space of matrices whose rank is less than or equal to a given $r \leq n$ is upper bounded by $mr$.  The results in \cite{FLA} are proved under the assumption that the cardinality of the base field is strictly greater than $r$ and that the characteristic differs from 2. Atkinson and Lloyd in \cite{AL} obtained the same result with the assumption only on the field size.  The square case with $r = n-1$ for an arbitrary field size was proved by Dieudonné in \cite{DIE}. Finally, Meshulam in \cite{MES} showed that the assumptions on the rank and on the field size are unnecessary for deriving the bound on the dimension. In fact, the next bound was proved in \cite{MES} and goes under the name of \textbf{Anticode bound}.

\begin{theorem}(Anticode bound) 
Let $\mathcal{C} \subseteq \mat$ be a rank-metric code. Then 
\begin{equation}\label{ACB}
\dim(\mathcal{C}) \leq m \, \maxrk(\mathcal{C}). 
\end{equation}
\end{theorem}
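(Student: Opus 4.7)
The plan is to prove the bound by induction on $n$. The base case $n = 1$ is immediate: if $\mathcal{C} \neq 0$ then $\maxrk(\mathcal{C}) = 1$ and $\dim(\mathcal{C}) \leq m$, so the inequality holds. For the inductive step, set $r = \maxrk(\mathcal{C})$. If $r = n$ the bound reduces to $\dim(\mathcal{C}) \leq nm$, which is obvious since $\mathcal{C} \subseteq \mat$. So I may assume $r < n$.

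My first move would be to consider the $\F_q$-linear map $\pi\colon \mathcal{C} \to \F_q^m$ that extracts the first row of each codeword, and to apply the inductive hypothesis to its kernel $\mathcal{C}_0$. Since matrices in $\mathcal{C}_0$ have vanishing first row, they embed in $\F_q^{(n-1) \times m}$ and still have maximum rank at most $r$; induction therefore gives $\dim(\mathcal{C}_0) \leq mr$. Combining this with the trivial estimate $\dim \pi(\mathcal{C}) \leq m$ yields only $\dim(\mathcal{C}) \leq mr + m$, which is off by a factor of $m$ from what is claimed.

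Closing this gap is the main obstacle and the real content of the theorem. To handle it, I would fix a codeword $M_0 \in \mathcal{C}$ of rank $r$ and, using an isometry $A \mapsto NAM$ provided by the classification theorem, bring it to the canonical form with $I_r$ in the upper-left block and zeros elsewhere. Writing an arbitrary codeword $N$ in matching block form $\left(\begin{smallmatrix} A & B \\ C & D \end{smallmatrix}\right)$ and using that $M_0 + \lambda N$ has rank at most $r$ for every $\lambda \in \F_q$, one obtains rank conditions on Schur-complement-type expressions in $A, B, C, D$. These conditions force structural vanishings that, after one or two further reductions, either cut $\dim \pi(\mathcal{C})$ down by $n-r$ or force $\maxrk(\mathcal{C}_0) < r$; in either case, combining with the inductive hypothesis produces the tight bound $mr$.

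The genuinely delicate point is that over an infinite base field the vanishing of a polynomial in $\lambda$ at infinitely many values would force it to vanish identically, instantly yielding the block constraints. Over $\F_q$ with $q$ small this Zariski-style shortcut is unavailable, and one must replace it with a direct linear-algebraic analysis of the rank-at-most-$r$ condition for matrices over $\F_q$. This is precisely the substance of Meshulam's strengthening of the earlier results of Flanders, Atkinson--Lloyd, and Dieudonné recalled just before the statement, and it is where I would expect the technical work to concentrate.
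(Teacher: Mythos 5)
There is a genuine gap, and it sits exactly where you park it. The paper itself offers no proof of this theorem: it quotes the result from Meshulam \cite{MES}, after recalling the partial results of Flanders \cite{FLA}, Dieudonn\'e and Atkinson--Lloyd \cite{AL}. So the only question is whether your outline amounts to a proof, and it does not. The induction frame (project onto the first row, apply the inductive hypothesis to the kernel $\mathcal{C}_0$) is fine but, as you note, yields only $\dim(\mathcal{C})\leq mr+m$. Everything then hinges on the sentence ``these conditions force structural vanishings that \dots either cut $\dim\pi(\mathcal{C})$ down by $n-r$ or force $\maxrk(\mathcal{C}_0)<r$,'' which is asserted, not proved; that assertion \emph{is} the theorem. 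Moreover, one horn of the dichotomy cannot close the gap even if it were established: from $\dim\pi(\mathcal{C})\leq m-(n-r)$ and $\dim(\mathcal{C}_0)\leq mr$ you would get $\dim(\mathcal{C})\leq mr+m-n+r$, and $mr+m-n+r\leq mr$ forces $m+r\leq n$, impossible since $m\geq n$ and $r\geq 1$. Only the branch $\maxrk(\mathcal{C}_0)\leq r-1$ gives $\dim(\mathcal{C})\leq m(r-1)+m=mr$, so the reduction you actually need is stronger and more rigid than what you describe.

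The Schur-complement route you sketch also derives constraints that are simply false over small fields, so it cannot be repaired by ``one or two further reductions.'' Concretely, take the space in Theorem \ref{firstclasstheorem}(b)(iv): $\mathcal{C}=\bigl\{\bigl(\begin{smallmatrix} a&0&0\\ c&b&0\\ d&e&a+b \end{smallmatrix}\bigr): a,b,c,d,e\in\F_2\bigr\}\subseteq\F_2^{3\times 3}$, which has $\maxrk(\mathcal{C})=2$. It contains $M_0=\mathrm{diag}(1,1,0)$, already in your canonical form with $r=2$, and also $N=\mathrm{diag}(0,1,1)$, whose lower-right block $D$ relative to that decomposition is nonzero even though $\rank(M_0+\lambda N)\leq 2$ for all $\lambda\in\F_2$. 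So even the first ``structural vanishing'' $D=0$ of the Flanders-type analysis fails for $q\leq r$; this is precisely why Flanders and Atkinson--Lloyd needed hypotheses on $|\F_q|$, and why Meshulam's field-size-free argument proceeds along genuinely different combinatorial lines rather than by pushing the rank-of-$M_0+\lambda N$ computation further. You have correctly located the obstacle, but locating it is not overcoming it: as written the proposal is a plan with a theorem-sized hole in the middle, and the honest conclusion is the one the paper draws, namely to invoke \cite{MES}.
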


\begin{definition}
An \textbf{optimal anticode} $\A \subseteq \mat$ is a rank-metric code which satisfies
$$\dim(\A) = m \,  \maxrk(\A).$$
\end{definition}

The classification of matrix spaces with least possible maximum rank for a given dimension follows the same history as the derivation of the Anticode Bound and is presented in the next theorem.

\begin{theorem}
Let $\A \subseteq \mat$ be an optimal anticode of dimension $mr$ with $r = \maxrk(\A)$. Then $\A=\Mat(V)$ for some $V \subseteq \F_q^n$ of dimension $r$, or $A=\Mat(V)^t$ and $m=n$.
\end{theorem}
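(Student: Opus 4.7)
The plan is in two parts. First, I would verify that each of the two candidate spaces is indeed an optimal anticode of dimension $mr$ with maximum rank $r$. For $V \subseteq \F_q^n$ with $\dim V = r$, the space $\Mat(V)$ is isomorphic to $V^{\oplus m}$ as an $\F_q$-vector space via the column decomposition, so $\dim \Mat(V) = mr$. Every $A \in \Mat(V)$ has column space inside $V$ and therefore rank at most $r$, and any matrix whose columns span $V$ achieves rank exactly $r$, so $\maxrk \Mat(V) = r$. Hence $\Mat(V)$ meets the Anticode bound. When $n = m$, transposition is an $\F_q$-linear isometry of $\mat$ by part (b) of the Hua--Wan theorem recalled above, so $\Mat(V)^t$ is also an optimal anticode of dimension $mr$.

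For the converse, which is the real content, I would invoke Meshulam's structure theorem from \cite{MES}, already cited in the paragraph preceding the statement. That theorem asserts that an extremal $\A \subseteq \mat$ with $\maxrk \A = r$ and $\dim \A = mr$ must be \emph{uniformly compressed on one side}: either every matrix in $\A$ has its column space contained in a fixed $r$-dimensional subspace $V \subseteq \F_q^n$, or every matrix in $\A$ has its row space contained in a fixed $r$-dimensional subspace $W \subseteq \F_q^m$. In the first case, $\A \subseteq \Mat(V)$ and equality of dimensions forces $\A = \Mat(V)$. In the second case, $\A$ is contained in the space of $n \times m$ matrices with row space in $W$, which has $\F_q$-dimension $nr$; thus $mr \leq nr$, which forces $m \leq n$, and combined with the standing assumption $n \leq m$ this yields $n = m$. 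The ambient space is then $\Mat(W)^t$, and the same dimension count gives $\A = \Mat(W)^t$.

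The main obstacle is concentrated in the structure theorem itself. A self-contained proof would proceed by induction on $r$: after applying an isometry so that $\A$ contains the standard rank-$r$ matrix $E_r = \begin{pmatrix} I_r & 0 \\ 0 & 0 \end{pmatrix}$, one writes any other $B \in \A$ in the matching $2 \times 2$ block decomposition and exploits $\rank(E_r + t B) \leq r$ for all $t \in \F_q$. A Schur-complement computation forces the southeast block of $B$ to vanish, and a careful analysis of the remaining blocks, combined with the inductive hypothesis applied to a suitable quotient, yields the two alternatives. The real subtlety, and the reason Meshulam's result supersedes those of Flanders and Atkinson--Lloyd mentioned above, lies in handling the smallest fields (notably $\F_2$) and characteristic $2$, where the usual ``pick a scalar $t$ avoiding finitely many bad values'' trick fails. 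Once that technicality is settled, the dimension-count endgame sketched above completes the proof without further difficulty.
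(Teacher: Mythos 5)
The paper does not actually prove this theorem: it is quoted as a known classification result, with the attribution spread over the chain Flanders, Atkinson--Lloyd, Dieudonn\'e, Meshulam discussed in the surrounding text. So your proposal is not competing with an in-paper argument. Measured on its own terms, the outer reduction is correct: granting the structure statement (every extremal space has all column spaces inside a fixed $r$-dimensional $V\subseteq\F_q^n$, or all row spaces inside a fixed $r$-dimensional $W\subseteq\F_q^m$), the dimension count $\dim\Mat(V)=mr$ versus $\dim\Mat(W)^t=nr$ does force $\A=\Mat(V)$ in the first case, and $mr\le nr$, hence $m=n$ and $\A=\Mat(W)^t$, in the second. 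Your opening paragraph verifying that $\Mat(V)$ and $\Mat(V)^t$ are optimal anticodes is superfluous, since the statement only asserts the converse implication.

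The one substantive concern is that the entire content of the theorem is the structure result you delegate to \cite{MES}, and the sketch you give of it is not a proof and is not Meshulam's argument. The induction that normalizes a rank-$r$ element to $\bigl(\begin{smallmatrix} I_r & 0\\ 0 & 0\end{smallmatrix}\bigr)$ and kills the southeast block via a Schur complement over varying scalars is essentially the Flanders/Atkinson--Lloyd route, which is precisely the one that requires $|\F_q|>r$; Meshulam's field-free treatment is of a different nature, and you do not indicate how the $\F_2$ and characteristic-$2$ obstructions would actually be overcome. You flag this honestly, and since the paper itself treats the classification as a citation, your write-up is no less rigorous than the source; but read as a self-contained proof it has a genuine gap located exactly at the small-field case, and it would be cleaner to present the structure theorem frankly as a quoted black box rather than as something ``settled'' by the sketched induction.
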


Some fundamental properties of optimal anticodes are stated in \cite{RAV}, for instance that $\mathcal{C}$ is an optimal anticode if and only if $\Cd$ is an optimal anticode.

In the end we want to introduce $q$-polymatroids, which are the $q$-analogue of polymatroids. 

\begin{definition}[\cite{q-poly}, Definition 4.1]
A \textbf{$q$-polymatroid} is a pair $P = (\F_q^n , \rho)$ where $\rho$ is a function from the set of all subspaces of $\F_q^n$ to $\R$, such that 
\begin{itemize}
\item[(i)] $0 \leq \rho(A) \leq \dim(A)$,
\item[(ii)] if $A \subseteq B$, then $\rho(A) \leq \rho(B)$,
\item[(iii)]$ \rho(A + B) + \rho(A \cap B) \leq \rho(A) + \rho(B)$.
\end{itemize}
for all $A, B \subseteq \F_q^n$.
\end{definition}

\notation{Let $s,t$ be positive integers. Throughout the paper we let $u \in \F_q^s$ denote a row vector of length $s$ with entries in $\F_q$. The zero matrix of size $s \times t$ is denoted by  $0_{s\times t}$ and the matrix with only one nonzero entry equal to one at position $(i,j)$ of size $n \times m$ is given by $E_{i,j}$.
Further, $f_i \in \F_q^m$ is the row vector with the only non zero entry equal 1 at position $i$, whereas $e_i \in \F_q^n$ is the column vector whose only non zero entry is a 1 in position $i$.}

\section{Quasi optimal and dually quasi optimal anticodes}\label{SecqOACs}

\begin{definition}\label{def} 
A \textbf{quasi optimal anticode (qOAC)} is a rank-metric code $\mathcal{C}\subseteq\mat$ such that $m\nmid\dim(\mathcal{C})$ and  
$$\maxrk(\mathcal{C})=\bigg \lceil\frac{\dim(\mathcal{C})}{m}\bigg\rceil.$$ 
If $\mathcal{C}$ and $\Cd$ are both qOACs, then $\mathcal{C}$ is a {\bf dually qOAC}. 
\end{definition}

\begin{notation}
For a qOAC $\mathcal{C}$, write 
$$\dim(\mathcal{C}) = \alpha m + \rho,\ \ 0 < \rho < m,\ \ 0\leq\alpha<n\leq m,\ \ \maxrk(\mathcal{C})=\alpha+1.$$
If in addition $\mathcal{C}$ is a dually qOAC, write
$$\dim(\mathcal{C}^\perp) = (n - \alpha - 1) m + (m - \rho),\ \ 0 < m -\rho < m,\  \maxrk(\mathcal{C}^\perp) = n - \alpha.$$
\end{notation}

It is well-known that the dual of an optimal anticode is an optimal anticode. However, this is not the case for qOACs. In the next example, we produce dually qOACs, as well as qOACs which are not dually qOACs.

\begin{example}\label{ex:duals}
Let $m\geq\max\{2,n\}$, let $0 < \alpha<n\leq m$, $0<\rho<m$, $0\leq k\leq m-\rho$. Let
$$\mathcal{C}_k=\left\{\begin{pmatrix}
 A & B \\
u & 0_{1\times k} \\
w & 0_{1 \times (m-\rho-k)} \\
  0_{(n-\alpha-1) \times  (\rho+k)} & 0_{(n-\alpha-1) \times (m-\rho - k)} \\
\end{pmatrix} : A \in \F_q^{(\alpha - 1) \times (m-k)}, B \in \F_q^{(\alpha - 1) \times k}, u \in \F_q^{m-k} , w \in \F_q^{\rho + k} \right\}$$ 
with dual code
$$\Cd_k=\left\{\begin{pmatrix}
 0_{(\alpha - 1) \times (m-k)} & 0_{(\alpha - 1) \times k} \\
0_{1 \times (m-k)}  & u \\
0_{1 \times (\rho + k)} & w \\
  A & B \\
\end{pmatrix} : A \in \F_q^{(n-\alpha - 1) \times (\rho + k)}, B \in \F_q^{(n- \alpha - 1) \times (m-\rho-k)}, u \in \F_q^{k} , w \in \F_q^{m- \rho - k} \right\}.$$
Notice that $\mathcal{C}_k \sim \mathcal{C}_{m-\rho-k}$ for all $k$. 
We have $\maxrk(\mathcal{C}_0) = \alpha + 1$ and $\maxrk(\Cd_0) = n - \alpha$, hence $\mathcal{C}_0\sim \mathcal{C}_{m-\rho}$ are dually qOACs. If $\rho\geq m-2$, then $\maxrk(\mathcal{C}_1) = \alpha + 1$ and $\maxrk(\Cd_1) = n - \alpha$, hence $\mathcal{C}_1\sim \mathcal{C}_{m-\rho-1}$ are dually qOACs. If $\rho\leq m-3$, then $\maxrk(\mathcal{C}_1)=\alpha+1$ and $\maxrk(\mathcal{C}_1^\perp)=n-\alpha+1$. Hence $\mathcal{C}_1\sim \mathcal{C}_{m-\rho-1}$ are qOACs, but not dually qOACs. For $k\neq 0,1,m-\rho-1,m-\rho$, one has $\maxrk(\mathcal{C}_k) = \alpha + 1$ and $\maxrk(\Cd_k) = n - \alpha + 1$. Therefore $\mathcal{C}_k\sim \mathcal{C}_{m-\rho-k}$ are qOACs, but not dually qOACs.
\end{example}

The next proposition relates the maximum rank of a code with that of its dual. It also provides us with a simple characterization of dually qOACs.

\begin{proposition}\label{prop:n+1}
Let $\mathcal{C}\subseteq\mat$ be a rank-metric code. Then $$\maxrk(\mathcal{C})+\maxrk(\mathcal{C}^\perp)\geq n.$$
In addition:
\begin{enumerate}
\item[(a)] $\mathcal{C}$ is an optimal anticode if and only if $\maxrk(\mathcal{C})+\maxrk(\mathcal{C}^\perp)=n$.
\item[(b)] $\mathcal{C}$ is a dually qOAC if and only if  $\maxrk(\mathcal{C})+\maxrk(\mathcal{C}^\perp)= n + 1$.
\end{enumerate}
\end{proposition}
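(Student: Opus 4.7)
The plan is to derive the inequality from the Anticode bound applied simultaneously to $\mathcal{C}$ and $\mathcal{C}^\perp$, and then track exactly when the slack is $0$ or $1$.

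First I would use the fact that $\dim(\mathcal{C})+\dim(\mathcal{C}^\perp)=nm$ together with the Anticode bound (\ref{ACB}) applied to both codes to write
$$nm=\dim(\mathcal{C})+\dim(\mathcal{C}^\perp)\leq m\,\maxrk(\mathcal{C})+m\,\maxrk(\mathcal{C}^\perp),$$
which divided by $m$ gives the inequality $\maxrk(\mathcal{C})+\maxrk(\mathcal{C}^\perp)\geq n$. For (a), equality forces both Anticode bounds to be tight, so both $\mathcal{C}$ and $\mathcal{C}^\perp$ are optimal anticodes; by the duality of optimal anticodes recalled after the Anticode bound section, this is equivalent to $\mathcal{C}$ alone being an optimal anticode.

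For (b), I would write $\dim(\mathcal{C})=am+b$ with $0\leq b<m$ and split on $b$. If $\mathcal{C}$ is a dually qOAC, then by definition $b>0$, $\maxrk(\mathcal{C})=a+1$, $\dim(\mathcal{C}^\perp)=(n-a-1)m+(m-b)$ with $0<m-b<m$, and $\maxrk(\mathcal{C}^\perp)=n-a$, so the sum is $n+1$. Conversely, assume $\maxrk(\mathcal{C})+\maxrk(\mathcal{C}^\perp)=n+1$. If $b>0$, the Anticode bound gives $\maxrk(\mathcal{C})\geq a+1$ and, since $0<m-b<m$, also $\maxrk(\mathcal{C}^\perp)\geq n-a$; the hypothesis then forces equalities in both, so neither dimension is a multiple of $m$ and both codes are qOACs, i.e.\ $\mathcal{C}$ is dually qOAC.

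The subtlety is ruling out $b=0$ in the converse of (b). Here $\maxrk(\mathcal{C})\geq a$ and $\maxrk(\mathcal{C}^\perp)\geq n-a$, so exactly one of the two $\maxrk$'s exceeds its lower bound by $1$. In either subcase one of the two codes saturates the Anticode bound, hence is an optimal anticode; by duality of optimal anticodes its dual is also optimal, forcing the other $\maxrk$ down to its minimum, which contradicts the sum being $n+1$. This contradiction step, combining the Anticode bound with the closure of optimal anticodes under duality, is the main point of the argument.
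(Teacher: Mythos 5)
Your proof is correct and follows essentially the same route as the paper: sum the Anticode bound over $\mathcal{C}$ and $\mathcal{C}^\perp$, use closure of optimal anticodes under duality for (a), and for (b) split on whether $m$ divides $\dim(\mathcal{C})$, ruling out the divisible case by contradiction. The only (cosmetic) difference is that the paper invokes optimal-anticode duality at the start of the $m\mid\dim(\mathcal{C})$ case to push both maximum ranks up to a sum of $n+2$, whereas you invoke it at the end to pull both down to a sum of $n$; either way the contradiction is the same.
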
 

\begin{proof}
The Anticode Bound on $\mathcal{C}$ and $\mathcal{C}^\perp$ yields \begin{equation}\label{eqn:sumofmaxrk}
mn=\dim(\mathcal{C})+\dim(\mathcal{C}^\perp)\leq m(\maxrk(\mathcal{C})+\maxrk(\mathcal{C}^\perp)).
\end{equation}

(a) $\mathcal{C}$ is an optimal anticode if and only if $\mathcal{C}^\perp$ is an optimal anticode. Hence, if $\mathcal{C}$ is an optimal anticode, then equality holds in~(\ref{eqn:sumofmaxrk}). Conversely, if equality holds in~(\ref{eqn:sumofmaxrk}), then $\mathcal{C}$ and $\mathcal{C}^\perp$ meet the Anticode Bound, hence they are optimal anticodes.

(b) If $\mathcal{C}$ is a dually qOAC, then $\maxrk(\mathcal{C})+\maxrk(\mathcal{C}^\perp)=n+1$ by direct computation. To prove the converse, first observe that, if $\maxrk(\mathcal{C})+\maxrk(\mathcal{C}^\perp)=n+1$, then $\C$ and $\C^\perp$ are not optimal anticodes by part (a).
If $m\mid \dim(\mathcal{C})$, then
$$\maxrk(\mathcal{C})+\maxrk(\mathcal{C}^\perp)\geq \frac{\dim(\mathcal{C})}{m}+1+\frac{\dim(\mathcal{C}^\perp)}{m} +1=n+2,$$ since $\C$ and $\C^\perp$ are not optimal anticodes. If instead $m\nmid\dim(\C)$, then \begin{equation}\label{eqn:n+1}
n+1=\maxrk(\mathcal{C})+\maxrk(\mathcal{C}^\perp)\geq \left\lceil\frac{\dim(\mathcal{C})}{m}\right\rceil+\left\lceil\frac{\dim(\mathcal{C}^\perp)}{m}\right\rceil\geq\alpha+1+n-\alpha,
\end{equation} where $\dim(\C)=\alpha m+\rho$, $\dim(\C^\perp)=(n-\alpha-1)m+(m-\rho)$, and $\rho>0$. Therefore the inequalities in (\ref{eqn:n+1}) are equalities, which completes the proof.
\end{proof}

A first classification of large matrix spaces of bounded rank appears in \cite{AL}, where Atkinson and Lloyd study linear spaces of dimension close to $mr$ over fields of large cardinality. Their classification was extended to all fields and matrices of arbitrary size by de Seguins Pazzis in \cite{Seguins}. As a direct consequence of the results by de Seguins Pazzis, we can characterize the qOACs whose dimension is at least $\alpha(m-1)+n$. 
\begin{theorem}[\cite{Seguins}, Theorem 4, Theorem 5, and Theorem 6] \label{firstclasstheorem}
Let $\mathcal{C} \subseteq \mat$ be a qOAC of $\dim(\mathcal{C}) = \alpha m + \rho$, $0<\rho<m$. 
\begin{enumerate}
\item[(a)] If $\rho>n-\alpha$, then $\mathcal{C}$ is equivalent to a linear subspace of $\Mat(\langle e_1,\ldots,e_{\alpha+1}\rangle)$.
\item[(b)] If $\rho = n - \alpha$, then one of the following holds:
\begin{itemize}
\item[(i)] $\mathcal{C}$ is equivalent to a linear subspace of  $\Mat(\langle e_1,\ldots,e_{\alpha+1}\rangle)$,
\item[(ii)] $\mathcal{C} \sim \Mat(\langle e_1,\ldots,e_{\alpha}\rangle)+\Mat(\langle e_1\rangle)^t$, \\ 
\item[(iii)] $m = n+1$ and
$\mathcal{C} \sim \Mat(\langle e_1,\ldots,e_{\alpha+1}\rangle)^t$,
\item[(iv)] $m=n=3$, $q=2$, and $\mathcal{C}\sim\left\{\begin{pmatrix}
a & 0 & 0 \\ c & b & 0 \\ d & e & a+b
\end{pmatrix} : (a,b,c,d,e)\in\mathbb{F}_2^5 \right\}$.
\end{itemize}
\end{enumerate}
\end{theorem}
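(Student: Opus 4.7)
The plan is to translate the qOAC hypothesis into the language of linear spaces of matrices of bounded rank and then apply the classification of de Seguins Pazzis directly. Setting $r:=\maxrk(\mathcal{C})=\alpha+1$, the code $\mathcal{C}$ is a linear subspace of $\mat$ in which every matrix has rank at most $r$, and $\dim(\mathcal{C})=(r-1)m+\rho$. The hypothesis $\rho\geq n-\alpha$ rewrites as $\dim(\mathcal{C})\geq (r-1)(m-1)+n$, with strict inequality in case (a) and equality in case (b). This is exactly the threshold regime for which Theorems~4, 5 and~6 of \cite{Seguins} provide a complete classification of linear spaces of rank at most $r$.

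Part (a) then follows from Theorem~4 of \cite{Seguins}: in the strictly supercritical range $\dim(\mathcal{C})>(r-1)(m-1)+n$, up to equivalence the only possibility is that $\mathcal{C}$ is contained in $\Mat(V)$ for some $r$-dimensional subspace $V\subseteq\F_q^n$. An $\F_q$-linear isometry of $\mat$ sending $V$ to $\langle e_1,\ldots,e_{\alpha+1}\rangle$ then produces the stated form. For part (b), the critical case, Theorems~5 and~6 of \cite{Seguins} list the generic case (i) together with three sporadic families: a mixed ``column plus row'' space realizing (ii); the pure row space $\Mat(\langle e_1,\ldots,e_{\alpha+1}\rangle)^t$, which has dimension $n(\alpha+1)$ and so matches the required $\alpha(m-1)+n$ precisely when $m=n+1$, giving (iii); and the exceptional $3\times 3$ example over $\F_2$ in (iv).

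The nontrivial content of the proof therefore reduces to dimensional bookkeeping and matching equivalence classes: verifying that $(r-1)(m-1)+n$ is indeed the de Seguins Pazzis threshold, checking that each of the three sporadic families in (b) has dimension exactly $\alpha(m-1)+n$ and maximum rank exactly $\alpha+1$ (for (ii), this amounts to an inclusion-exclusion computation using $\dim\bigl(\Mat(\langle e_1,\ldots,e_\alpha\rangle)\cap\Mat(\langle e_1\rangle)^t\bigr)=\alpha$), and confirming that the canonical forms chosen in \cite{Seguins} are equivalent in $\mbox{Isom}_{\F_q}(\mat)$ to those stated here. The only genuinely delicate point is the sporadic case (iv), which is intrinsic to the characteristic-$2$ classification and has no analogue in odd characteristic; there is no shortcut that eliminates it, and our statement simply carries it along from \cite{Seguins}.
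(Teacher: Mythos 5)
Your proposal is correct and follows essentially the same route as the paper, which offers no independent proof of this statement and simply imports it from \cite{Seguins} (Theorems 4, 5, and 6); the only content to supply is exactly the translation you carry out, namely that $\maxrk(\mathcal{C})=\alpha+1$ and $\dim(\mathcal{C})=\alpha m+\rho\geq \alpha(m-1)+n$ place $\mathcal{C}$ in the threshold regime of that classification, with equality precisely in case (b). Your dimension checks for the sporadic families (in particular that $\Mat(\langle e_1,\ldots,e_{\alpha+1}\rangle)^t$ has dimension $n(\alpha+1)$, forcing $m=n+1$ in (iii), and the inclusion--exclusion count for (ii)) are accurate and match what the paper implicitly relies on.
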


Using Theorem~\ref{firstclasstheorem}, we can classify dually qOAs.  

\begin{theorem}\label{corollaryC}
Let $\mathcal{C} \subseteq \mat$ be a dually qOAC with $\dim(\mathcal{C}) = \alpha m + \rho$, $0<\rho<m$. One of the following holds:
\begin{enumerate}
\item[(a)] $$\mathcal{C} \sim \left\{\begin{pmatrix}
A & B\\
u & 0_{1 \times (m-\rho)} \\
0_{(n-\alpha -1) \times \rho} & 0_{(n-\alpha -1) \times (m-\rho)} \\
\end{pmatrix} : A\in\F_q^{\alpha\times\rho}, B\in\F_q^{\alpha\times(m-\rho)}, u\in\F_q^{1\times \rho}\right\},$$
\item[(b)] $\rho\leq n-\alpha$ %$\rho\leq\min\{m-1,n\}-\alpha$ 
and $$\mathcal{C} \sim \left\{ \begin{pmatrix}
u & A \\
v & 0_{\rho\times(m-1)}\\
0_{(n-\alpha-\rho)\times 1} & 0_{(n-\alpha-\rho)\times(m-1)}\\
\end{pmatrix} : A\in \F_q^{\alpha\times(m-1)}, \\ 
u\in \F_q^{\alpha\times 1}, v\in \F_q^{\rho\times 1}\right\},$$
\item[(c)] $\rho\geq m-\alpha-1$ %$\rho\geq\{m-1,n\}-\alpha$
and 
\begin{equation*}
\mathcal{C}\sim\left\{\begin{pmatrix}
A & u \\
B & 0_{(m-\rho)\times 1}\\
0_{(n-\alpha-1)\times (m-1)} & 0_{(n-\alpha-1)\times 1}\\
\end{pmatrix} : A\in \F_q^{(\alpha+\rho+1-m)\times(m-1)},
B\in \F_q^{(m-\rho)\times(m-1)}, u\in \F_q^{(\alpha+\rho+1-m)\times 1}\right\}.
\end{equation*}
\item[(d)] $m=n+1$, $\rho=n-\alpha$, and 
$$\mathcal{C} \sim \left\{ \begin{pmatrix}
A & 0_{n \times (n-\alpha)} \\
\end{pmatrix} : A \in \F_q^{n \times (\alpha +1)} \right\}.$$
\end{enumerate}
\end{theorem}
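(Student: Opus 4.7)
The plan is to combine Proposition~\ref{prop:n+1}(b), Theorem~\ref{firstclasstheorem}, and the classical classification of $\F_q$-linear spaces of matrices of rank at most~$1$. By Proposition~\ref{prop:n+1}(b), $\maxrk(\mathcal{C}^\perp)=n-\alpha$. The hypothesis of Theorem~\ref{firstclasstheorem} applied to $\mathcal{C}$ reads $\rho\geq n-\alpha$, while applied to $\mathcal{C}^\perp$ (whose parameters are $\alpha'=n-\alpha-1$, $\rho'=m-\rho$) it reads $\rho\leq m-\alpha-1$. Since $m\geq n$, at least one of these two conditions must hold, so at least one of the two applications is available; the two cases will be analyzed in parallel and will turn out to overlap on part of the parameter range.

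\emph{Case A: $\rho\geq n-\alpha$.} Apply Theorem~\ref{firstclasstheorem} to $\mathcal{C}$. The generic outcome---the only one when $\rho>n-\alpha$---is $\mathcal{C}\sim\mathcal{D}\subseteq\Mat(\langle e_1,\ldots,e_{\alpha+1}\rangle)$. Writing the elements of $\mathcal{D}$ in block form $\begin{pmatrix}M_1\\ 0\end{pmatrix}$ identifies $\mathcal{D}$ with a subspace $\widetilde{\mathcal{D}}\subseteq\F_q^{(\alpha+1)\times m}$ of dimension $\alpha m+\rho$, and a direct computation yields $\mathcal{D}^\perp=\bigl\{\begin{pmatrix}N_1\\ N_2\end{pmatrix}:N_1\in\widetilde{\mathcal{D}}^{\perp},\ N_2\in\F_q^{(n-\alpha-1)\times m}\bigr\}$, where $\widetilde{\mathcal{D}}^{\perp}$ is the dual of $\widetilde{\mathcal{D}}$ inside $\F_q^{(\alpha+1)\times m}$, of dimension $m-\rho$. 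Since $N_2$ is unrestricted, the identity $\maxrk(\mathcal{C}^\perp)=n-\alpha$ forces every matrix of $\widetilde{\mathcal{D}}^{\perp}$ to have rank at most~$1$: any $N_1$ of rank~$\geq 2$, paired with a suitable $N_2$ whose row space is disjoint from that of $N_1$ and has maximal possible dimension, would give a codeword of $\mathcal{C}^\perp$ of rank at least $n-\alpha+1$. The classical classification of linear spaces of rank-at-most-$1$ matrices then yields two options: either $\widetilde{\mathcal{D}}^{\perp}=\langle v\rangle\otimes W$ for some $v\in\F_q^{\alpha+1}$ and $W\subseteq\F_q^m$ with $\dim W=m-\rho$, leading after a row/column isometry to form~(a); or $\widetilde{\mathcal{D}}^{\perp}=C\otimes\langle w\rangle$ for some $C\subseteq\F_q^{\alpha+1}$ with $\dim C=m-\rho\leq\alpha+1$ and $w\in\F_q^m$, leading to form~(c) with the constraint $\rho\geq m-\alpha-1$. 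In the boundary case $\rho=n-\alpha$, the outcomes (ii)--(iv) of Theorem~\ref{firstclasstheorem} must also be considered: (ii) reproduces form~(b) at $\rho=n-\alpha$; (iii), which forces $m=n+1$, yields form~(d); and (iv) is excluded by a direct computation showing that the dual of the exceptional $\F_2$-code has maximum rank~$3\neq n-\alpha=2$.

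\emph{Case B: $\rho\leq m-\alpha-1$.} The analysis is entirely symmetric, with $\mathcal{C}$ and $\mathcal{C}^\perp$ interchanged. The generic outcome places $\mathcal{C}^\perp$ inside a $\Mat(V')$ with $\dim V'=n-\alpha$, and the same rank-at-most-$1$ argument (now applied to the dual of a subspace of $\F_q^{(n-\alpha)\times m}$ of dimension $\rho$) produces form~(a) in the column-direction subcase and form~(b) (subject to $\rho\leq n-\alpha$) in the row-direction subcase. In the boundary case $\rho=m-\alpha-1$, the analogs of (ii)--(iv) for $\mathcal{C}^\perp$ dualize back to $\mathcal{C}$ to give form~(c), form~(d), and again the excluded exceptional $\F_2$-code, respectively. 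Together, Cases~A and~B cover every admissible value of $\rho$ and produce exactly the four families (a)--(d).

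The hard part is bookkeeping: each algebraic description coming from the two classifications has to be put into the explicit shape of one of (a)--(d) via an appropriate row/column isometry, and in Case~B one must carry out the dualization that translates a description of $\mathcal{C}^\perp$ back to $\mathcal{C}$. The conceptual inputs---the reduction to a rank-$\leq 1$ classification via the freedom in $N_2$, and the direct verification that the exceptional $\F_2$-code of Theorem~\ref{firstclasstheorem}(iv) fails to be a dually qOAC---are short, and the main effort is in these basis changes and duality computations.
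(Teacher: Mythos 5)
Your proposal is correct: every case is covered, the rank-$\leq 1$ reduction in Case A is sound (the stacked matrix $\binom{N_1}{N_2}$ has rank $\rank(N_1)+n-\alpha-1\geq n-\alpha+1$ whenever $\rank(N_1)\geq 2$, and the disjoint row space exists because $n\leq m$), the dimension counts ($\dim\widetilde{\mathcal{D}}^\perp=m-\rho$ in Case A, $\rho$ in Case B) match the constraints $\rho\geq m-\alpha-1$ for form (c) and $\rho\leq n-\alpha$ for form (b), and your exclusion of the exceptional $\F_2$-code is right: its trace-dual is the $4$-dimensional space of upper-triangular matrices with constant diagonal, whose maximum rank is $3\neq 2$ — the paper asserts this without computation. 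However, your route is organized differently from the paper's. The paper does not invoke Theorem~\ref{firstclasstheorem} as the entry point; instead it splits on containment: (1) $\mathcal{C}\supseteq\Mat(U)$ with $\dim(U)=\alpha$, (2) $\mathcal{C}\subseteq\Mat(U)$ with $\dim(U)=\alpha+1$ (equivalently $\mathcal{C}^\perp\supseteq\Mat$ of dimension $n-\alpha-1$, handled by re-running case (1) on $\mathcal{C}^\perp$ and dualizing), and (3) neither. In cases (1)--(2) the argument is self-contained elementary linear algebra: writing $\mathcal{C}=\Mat(\langle e_1,\ldots,e_\alpha\rangle)+\langle M_1,\ldots,M_\rho\rangle$, the same disjoint-row-space trick you use shows $\maxrk(\langle M_1,\ldots,M_\rho\rangle)=1$ on the primal side, and the rank-$1$ classification gives forms (a), (b), (c) directly. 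Only in case (3) does the paper invoke Theorem~\ref{firstclasstheorem} (via the same dichotomy $\rho\geq n-\alpha$ or $\rho\leq m-\alpha-1$ from $n\leq m$ that you use), where the case-(3) hypotheses kill outcomes (i)--(ii), outcome (iv) fails to be dually qOAC, and (iii) yields form (d). So your proof leans on the full de Seguins Pazzis classification in every case and performs the rank-$\leq 1$ analysis on the dual side throughout, which buys automatic exhaustiveness and a single uniform mechanism; the paper's trichotomy keeps the generic families (a)--(c) independent of the external classification theorem, reserving it for the one residual corner, at the cost of having to argue separately that the three containment cases exhaust all possibilities. Both are complete proofs with essentially the same technical kernel.
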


\begin{proof}
It is easy to check that the codes in the statement of the theorem are dually qOACs. We now prove that, up to equivalence, they are the only ones. 
We start by analyzing the case when %$\rho\leq m-\alpha-1$ and
$\mathcal{C}\supseteq\Mat(U)$, for some $U\subseteq\F_q^n$ of $\dim(U)=\alpha$. Up to equivalence, we may assume that $U=\langle e_1,\ldots,e_\alpha\rangle$. Write $\mathcal{C}=\Mat(\langle e_1,\ldots,e_\alpha\rangle)+\langle M_1,\ldots,M_\rho\rangle$, where $M_1,\ldots,M_\rho\in\Mat(\langle e_1,\ldots,e_\alpha\rangle)^\perp=\Mat(\langle e_{\alpha+1},\ldots,e_n\rangle)$ are linearly independent. Since $\mathcal{C}$ is a qOAC, then $\maxrk(\mathcal{C})=\alpha+1$. 
We claim that $\maxrk(\langle M_1,\ldots,M_\rho\rangle)=1$. In fact, any $M\in \langle M_1,\ldots,M_\rho\rangle\subseteq\Mat(\langle e_{\alpha+1},\ldots,e_n\rangle)$ has $$\dim(\rowsp(M))=\rank(M)\leq\maxrk(\Mat(\langle e_{\alpha+1},\ldots,e_n\rangle))=n-\alpha.$$
Let $L\in\Mat(\langle e_1,\ldots,e_\alpha\rangle)$ be a matrix whose first $\alpha$ rows are linearly independent vectors in a vector space $V$ such that $V\oplus\rowsp(M)=\mathbb{F}_q^m$. Notice that one can always find such an $L$, since $\dim(V)=m-\dim(\rowsp(M))\geq m-(n-\alpha)\geq\alpha$. Then $\rowsp(L)\cap\rowsp(M)=0$, so $L+M\in \mathcal{C}$ has 
$$\rank(L + M)=\rank(M)+\rank(L)\leq\maxrk(\mathcal{C})=\alpha + 1,$$ 
which proves that $\rank(M)\leq 1$, since $\rank(L)=\alpha$.
Since $\maxrk(\langle M_1,\ldots,M_\rho\rangle)=1$, then either 
$\langle M_1,\ldots,M_\rho\rangle\subseteq\Mat(w)$ for some $w\in\langle e_{\alpha+1},\ldots,e_{n}\rangle\subseteq \mathbb{F}_q^n$, or 
$\langle M_1,\ldots,M_\rho\rangle\subseteq\Mat(w)^t$ for some $w\in\mathbb{F}_q^m$. Since $\langle M_1,\ldots,M_\rho\rangle\subseteq\Mat(\langle e_{\alpha+1},\ldots,e_n\rangle)$, the latter is only possible if $$\rho=\dim(\langle M_1,\ldots,M_\rho\rangle)\leq \dim(\Mat(\langle e_{\alpha+1},\ldots,e_n\rangle)\cap\Mat(w)^t)=n-\alpha.$$

If $\langle M_1,\ldots,M_\rho\rangle\subseteq\Mat(w)$, then, after suitable invertible operations involving the last $n-\alpha$ rows, we may suppose that $w=e_{\alpha+1}$. Moreover, after suitable invertible column operations
$$M_i = \begin{pmatrix}
0_{n\times(i-1)} & e_{\alpha+1} & 0_{n\times(m-i)} \\
\end{pmatrix} \quad \mbox{ for } 1\leq i \leq \rho.$$
Since both types of operations fix $\Mat(\langle e_1,\ldots,e_\alpha\rangle)$, we have shown that
\begin{equation} \label{dqOAC1}
\mathcal{C} \sim\left\{  \begin{pmatrix}
A & B\\
u & 0_{1 \times (m-\rho)}  \\
0_{(n-\alpha -1) \times \rho} & 0_{(n-\alpha-1) \times (m-\rho)}\\
\end{pmatrix} : A \in \F_q^{\alpha \times \rho}, B \in \F_q^{\alpha \times (m-\rho)}, u \in \F_q^{1\times\rho}\right\}.
\end{equation}
This yields the codes in part (a) of the statement.

If %$\rho\leq \min\{m-1,n\}-\alpha$
$\rho\leq n-\alpha$ and $\langle M_1,\ldots,M_\rho\rangle\subseteq\Mat(w)^t$, then, after suitable invertible operations involving the last $n-\alpha$ rows, we may suppose that $M_i$ is the matrix whose rows are all zero, except for row $\alpha+i$ which is equal to $w$. Up to invertible column operations, we may further suppose that $w=e_1\in\mathbb{F}_q^m$. Since both types of operations fix $\Mat(\langle e_1,\ldots,e_\alpha\rangle)$, we have that
$$\mathcal{C} \sim \left\{ \begin{pmatrix}
u & A \\
v & 0_{\rho\times(m-1)}\\
0_{(n-\alpha-\rho)\times 1} & 0_{(n-\alpha-\rho)\times(m-1)}\\
\end{pmatrix} : A\in \F_q^{\alpha\times(m-1)}, \\ 
u\in \F_q^{\alpha\times 1}, v\in \F_q^{\rho\times 1}\right\}.$$
This yields the codes in part (b) of the statement. 

%If $\rho\geq n-\alpha$ and 
Suppose now that $\C\subseteq\Mat(U)$ for some $U\subseteq\F_q^n$ of $\dim(U)=\alpha+1$. Up to equivalence, we may assume that $U=\Mat(\langle e_{n-\alpha},\ldots,e_n\rangle)$, that is,  $\mathcal{C}^\perp\supseteq\Mat(\langle e_1,\ldots,e_{n-\alpha-1}\rangle)$. Then the above argument shows that either
$$\mathcal{C}^\perp \sim\left\{  \begin{pmatrix}
A & B\\
u & 0_{1\times\rho}  \\
0_{\alpha\times(m-\rho)} & 0_{\alpha\times \rho}\\
\end{pmatrix} : A \in \F_q^{(n-\alpha-1)\times(m-\rho)}, B \in \F_q^{(n-\alpha-1) \times\rho}, u \in \F_q^{1\times (m-\rho)}\right\}$$ or
$$\mathcal{C}^\perp \sim \left\{ \begin{pmatrix}
u & A \\
v & 0_{(m-\rho)\times(m-1)}\\
0_{(\alpha+\rho+1-m)\times 1} & 0_{(\alpha+\rho+1-m)\times(m-1)}\\
\end{pmatrix} : A\in \F_q^{(n-\alpha-1)\times(m-1)}, \\ 
u\in \F_q^{(n-\alpha-1)\times 1}, v\in \F_q^{(m-\rho)\times 1}\right\}.$$
Moreover, the latter is only possible when $\rho\geq m-\alpha-1$. Taking duals, we obtain that either $\mathcal{C}$ has the form (\ref{dqOAC1}), or
$$\mathcal{C}\sim\left\{\begin{pmatrix}
A & u \\
B & 0_{(m-\rho)\times 1}\\
0_{(n-\alpha-1)\times (m-1)} & 0_{(n-\alpha-1)\times 1}\\
\end{pmatrix} : A\in \F_q^{(\alpha+\rho+1-m)\times(m-1)}, \\ 
B\in \F_q^{(m-\rho)\times(m-1)}, u\in \F_q^{(\alpha+\rho+1-m)\times 1}\right\}.$$
This yields the codes in part (c) of the statement.

Finally, suppose that $\mathcal{C}\not\supseteq\Mat(U)$ for any $U\subseteq\F_q^n$ of $\dim(U)=\alpha$ and  $\mathcal{C}\not\subseteq\Mat(U)$ for any $U\subseteq\F_q^n$ of $\dim(U)=\alpha+1$. This implies also that $\mathcal{C}^\perp\not\subseteq\Mat(U)$ for any $U\subseteq\F_q^n$ of $\dim(U)=n-\alpha$. Since $n\leq m$, then either $\rho\leq m-\alpha-1$ or $\rho\geq n-\alpha$. This implies that, for any $0<\rho<m$, Theorem~\ref{firstclasstheorem} applies to either $\mathcal{C}$ or $\mathcal{C}^\perp$. Because of our assumptions, and since the code of Theorem~\ref{firstclasstheorem} (b) (iv) is not a dually qOAC, $\C$ or $\C^\perp$ is a code as in Theorem~\ref{firstclasstheorem} (b) (iii). 
Therefore $m=n+1$ and $\rho=m-\alpha-1=n-\alpha$ and we obtain the codes in part (d) of the statement.
\end{proof}

Theorem \ref{firstclasstheorem} and Theorem \ref{corollaryC} provide us with many examples of codes which are qOACs but not dually qOACs. In fact, any code as in Theorem \ref{firstclasstheorem} which is not one of the codes in Theorem \ref{corollaryC} is of this kind. We now give some concrete examples of qOACs which are not dually qOACs. These examples are covered by Theorem 2.5, unless $m=n$ and $\rho=n-\alpha-1$.

\begin{example}\label{notdqOAC}
Let $1\leq\alpha\leq n-1$ and $m-\alpha-1\leq\rho\leq m-2$. Let
$$ \begin{array}{ccl} 
\C &=& \Bigg \{ \begin{pmatrix}
\mathcal{V} & B \\
A & C \\
0_{(n-\alpha-1) \times (m-\rho)} & 0_{(n-\alpha-1) \times \rho} \\
\end{pmatrix} : \mathcal{V} \subseteq \F_q^{(m-\rho) \times (m-\rho)} \mbox{ the set of matrices} \\ 
& & \\
& & \mbox{ with 0 on the diagonal}, A \in \F_q^{(\alpha+1-m+\rho) \times (m-\rho)}, 
 B \in \F_q^{(m-\rho) \times \rho}, C \in \F_q^{(\alpha + 1 - m +\rho) \times \rho }\Bigg \}. \end{array}$$
 Then $\C \subseteq \mat$ has dimension $\alpha m + \rho$ and maximum rank $\alpha + 1$, hence it is a qOAC. Its dual is given by 
 $$ \begin{array}{ccc} 
\Cd &=& \Bigg \{ \begin{pmatrix}
\mathcal{D} & 0_{(m-\rho)\times\rho} \\
0_{(\alpha+1-m+\rho)\times(m-\rho)} & 0_{(\alpha+1-m+\rho)\times \rho} \\
A & B\\
\end{pmatrix} : A \in \F_q^{(n-\alpha-1) \times (m-\rho)}, 
 B \in \F_q^{(n-\alpha-1) \times \rho}, \\ 
& & \\
& & \mathcal{D} \subseteq \F_q^{(m-\rho) \times (m-\rho)} \mbox{ the set of diagonal matrices} \Bigg \}.\end{array}$$
$\C^\perp$ has dimension $(n-\alpha)m - \rho$ and maximum rank
$m+n-\rho-\alpha-1\geq n-\alpha+1$. Hence $\Cd$ is not a qOAC, that is, $\C$ is not a dually qOAC.
\end{example}

Another result from \cite{Seguins2} provides us with information on qOACs beyond those treated in Theorem \ref{firstclasstheorem}. Below we state the result in our language.

\begin{theorem}[\cite{Seguins2}, Theorem 1.7]\label{secondclasstheorem}
Let $\mathcal{C} \subseteq \mat$ be a qOAC with $\dim(\mathcal{C}) = \alpha m + \rho$, $0<\rho<m$. If $q = 2$ suppose that $\rho \geq 2(n-\alpha+1) - m$, else suppose that $\rho \geq 2(n - \alpha) - m$. Then $\mathcal{C}$ is equivalent to a linear subspace of  $\Mat(\langle e_1,\ldots,e_s\rangle)+\Mat(\langle e_1,\ldots,e_{\alpha+1-s}\rangle)^t$, for some $s\in\{0,\ldots,\alpha+1\}$. 
\end{theorem}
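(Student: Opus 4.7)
The proof is in essence a translation of [Seguins2, Theorem 1.7] into the language of rank-metric codes, and the plan is to carry out this translation carefully.

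First I would identify the parameters. A qOAC $\mathcal{C}$ of dimension $\alpha m + \rho$ is, forgetting its code structure, a linear subspace of $\mat$ of dimension $\alpha m + \rho$ whose rank is bounded by $r := \maxrk(\mathcal{C}) = \alpha+1$. The hypothesis $\rho \geq 2(n-\alpha) - m$, respectively $\rho \geq 2(n-\alpha+1) - m$ in the $q=2$ case, rewrites as $\dim(\mathcal{C}) \geq (r-2)m + 2(n-r+1)$ (with a shift of $2$ when $q=2$), which should match exactly the dimension threshold in [Seguins2, Theorem 1.7] separating the generic structural regime from the small-dimensional exceptions. I would verify this matching of thresholds as the first bookkeeping step.

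Second I would reinterpret the conclusion. De Seguins Pazzis classifies such bounded-rank spaces above the threshold as being contained, up to equivalence, in a sum $\Mat(V) + \Mat(W)^t$ for some $V\subseteq\F_q^n$ and $W\subseteq\F_q^m$ with $\dim(V)+\dim(W)=r=\alpha+1$; indeed any element of $\Mat(V)+\Mat(W)^t$ has rank at most $\dim(V)+\dim(W)=\alpha+1$, matching $\maxrk(\mathcal{C})$. Using the classification of $\F_q$-linear isometries of $\mat$ (which act via left and right multiplication by invertible matrices), I can move any such $V$ and $W$ to the standard subspaces $\langle e_1,\ldots,e_s\rangle$ and $\langle e_1,\ldots,e_{\alpha+1-s}\rangle$, where $s=\dim(V)\in\{0,\ldots,\alpha+1\}$. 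This yields the stated conclusion up to equivalence.

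The main obstacle is the purely bibliographic and notational one: ensuring that de Seguins Pazzis's statement, often phrased via ``primitive'' or ``irreducible'' constituents, really does collapse to a single containment $\mathcal{C}\subseteq\Mat(V)+\Mat(W)^t$ in our regime, with no residual exceptional families (analogous to Theorem~\ref{firstclasstheorem} (b) (iv)) surviving. In particular, I would double-check that the $q=2$ shift in the threshold is precisely what is needed to rule out such exceptions, and that the exception-free regime produced by [Seguins2, Theorem 1.7] is parametrized exactly by $s\in\{0,\ldots,\alpha+1\}$ as claimed.
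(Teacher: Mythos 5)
Your proposal is correct and coincides with what the paper does: Theorem~\ref{secondclasstheorem} is not proved in the paper but simply imported from \cite{Seguins2}, Theorem~1.7, restated in the rank-metric language after matching the dimension thresholds and normalizing the subspaces $V$ and $W$ via linear isometries, exactly as you outline. Your bookkeeping check $\alpha m + 2(n-\alpha) - m = (r-2)m + 2(n-r+1)$ with $r=\alpha+1$ is the right verification to perform.
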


If $\C$ is a qOAC of dimension $\dim(\mathcal{C}) = \alpha m + \rho$ and $\maxrk(\C)=\alpha+1$, then $\C$ cannot be equivalent to a subspace of $\Mat(\langle e_1,\ldots,e_s\rangle)+\Mat(\langle e_1,\ldots,e_k\rangle)^t$ for any $s,k\geq 0$ with $s+k\leq\alpha$. 
Lemma 1 in \cite{FLA} shows that, up to equivalence, $\C$ is contained in $\Mat(\langle e_1,\ldots,e_{\alpha+1}\rangle)+\Mat(\langle e_1,\ldots,e_{\alpha+1}\rangle)^t$. 
Theorem \ref{firstclasstheorem} and Theorem \ref{secondclasstheorem} prove that, under some assumptions on $\rho$, $\C$ is equivalent to a subspace of $\Mat(\langle e_1,\ldots,e_s\rangle)+\Mat(\langle e_1,\ldots,e_{\alpha+1-s}\rangle)^t$. In the next example, we exhibit a code which does not satisfy the assumptions of Theorem \ref{firstclasstheorem} or Theorem \ref{secondclasstheorem} and which is not equivalent to a subspace of $\Mat(\langle e_1,\ldots,e_s\rangle)+\Mat(\langle e_1,\ldots,e_{\alpha+1-s}\rangle)^t$ for any $0\leq s\leq \alpha+1$. The example is based on \cite[Remark on pg. 228]{MES}. 

\begin{example}
 Let $\C \subseteq \F_2^{4 \times 4}$ be a linear code given by 
$$\C = \left\{ \begin{pmatrix}
a_1 & a_4 & a_5 & a_6\\
0 & a_2 & a_7 & a_8 \\
0 & 0 & a_2 + a_3 & a_9\\
0 & 0 & 0 & a_3\\
\end{pmatrix} : a_i \in \F_2 \mbox{ for } 1 \leq i \leq 9 \right\}.$$
Then $\maxrk(\C) = 3$ and $\dim(\C) = 9$, hence $\C$ is a qOAC. 
We claim that $\C$ is not equivalent to a subspace of $\Mat(\langle e_1,\ldots,e_s\rangle)+\Mat(\langle e_1,\ldots,e_{3-s}\rangle)^t$ for any $0\leq s\leq 3$. To see this, consider the following two elements of $\C$:
$$M_1 = \begin{pmatrix}
1 & 0 & 0 & 0\\
0 & 1 & 0 & 0\\
0 & 0 & 1 & 0 \\
0 & 0 & 0 & 0 \\
\end{pmatrix} \mbox{ and } M_2 = \begin{pmatrix}
0 & 0 & 0 & 0\\
0 & 0 & 0 & 0\\
0 & 0 & 1 & 0\\
0 & 0 & 0 & 1\\
\end{pmatrix}. $$
We prove that if $M_1$ is contained in $\Mat(U) + \Mat(V)^t$ for some $U,V \subseteq \F_2^4  $ of $\dim(V) + \dim(U) = 3$, then $U,V\subseteq \langle e_1, e_2, e_3\rangle$. Notice that this establishes the claim, since $M_2$ is not contained in such a space. 

The only non trivial case, up to transposition, is $\dim(U) = 2$ and  $\dim(V) = 1$. Let $U = \langle (x_1, y_1, z_1, 0) , (x_2, y_2, z_2, w) \rangle \subseteq \F_2^4 $ and $V = \langle (a,b,c,d)\rangle   \subseteq \F_2^4$. By assumption there are $A \in \Mat(\langle (x_1, y_1, z_1, 0) , (x_2, y_2, z_2, w) \rangle)$ and $B \in \Mat(\langle (a,b,c,d)\rangle)^t$, such that $M_1 = A + B$. Equating the last column of $A+B$ to that of $M_1$ yields the following homogeneous linear system:
\begin{equation} \label{GLS}
\begin{array}{ccccccc}
0 & = & \alpha d &+& t_1 x_1 &+& t_2 x_2 \\
0 & = & \beta d &+& t_1 y_1 &+& t_2 y_2 \\
0 & = & \gamma d &+& t_1 z_1 &+& t_2 z_2 \\
0 & = & \delta d &+& 0 &+& t_2 w. \\
\end{array}
\end{equation}
Notice that $(\alpha, \beta , \gamma, \delta ) \notin U$, otherwise $M_1$ would be contained in $\Mat(U)$, but this is not possible since $\maxrk(\Mat(U))=2$. Therefore, the only solution of (\ref{GLS}) regarded as a system in the variables $d,t_1,t_2$ is $d=t_1=t_2=0$.    

Applying the same argument to the last row of $M_1$ yields another homogeneous linear system, which, regarded as system in the variables $\delta,w$ only has the trivial solution $\delta=w=0$. In particular, we have proved that $d=w=0$, i.e., $U,V\subseteq\langle e_1, e_2, e_3 \rangle.$
\end{example}

In the sequel, we concentrate on linear spaces of the form $\Mat(\langle e_1,\ldots,e_s\rangle)+\Mat(\langle e_1,\ldots,e_k\rangle)^t$, as well as some of their linear subspaces. 

\begin{definition}\label{specialcodes}
Let $s,h,k\geq 0$ be integers such that $k\leq m$ and $0<s+h\leq n$.
Let $$\mathcal{C}_{s,h,k}=\left\{ \begin{pmatrix}
A & B  \\
C & 0_{h \times (m-k)} \\
0_{(n-s-h) \times k} & 0_{(n-s-h) \times (m-k)} \\
\end{pmatrix} : A \in \F_q^{s \times k} , B \in \F_q^{s \times (m-k)} , C \in \F_q^{h \times k}, \right\}\subseteq\mat.$$
\end{definition}

\begin{remark}
Up to equivalence, all the dually qOAC are of the form $\mathcal{C}_{s,h,k}$ for some $s,h,k$ by Theorem \ref{corollaryC}.
Moreover, one can check that $C_{s,h,k}$ is a dually qOAC if and only if there are parameters $0\leq\alpha<n$ and $0<\rho<m$ such that $(s,h,k)$ is among $(\alpha,\rho,1), (\alpha,1,\rho), (0,n,\alpha+1)$ if $m=n+1$, or $(\alpha+\rho+1-m,m- \rho,m-1)$ if $\rho\geq m-\alpha-1$.
\end{remark}

In the rest of the paper, we compute the invariants of the codes from Definition \ref{specialcodes}. In the next proposition, we characterize which codes of the form $\mathcal{C}_{s,h,k}$ are qOACs. 
Together with Theorem~\ref{corollaryC}, Proposition~\ref{structure} yields examples of qOACs which are not dually qOACs, beyond those of Example~\ref{notdqOAC}. Some examples of this kind are given in Example \ref{newqOAC}.

\begin{proposition}\label{structure}
Let $s,h,k$ be non-negative integers such that $0<s+h \leq n$ and $k<m$.
Then $\mathcal{C}_{s,h,k}$ is a qOAC if and only if 
$$0<\min\{h,k\}\leq\left\lfloor\frac{m-1}{m-\max\{h,k\}}\right\rfloor.$$
\end{proposition}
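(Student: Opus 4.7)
The plan is to compute $\dim(\mathcal{C}_{s,h,k})$ and $\maxrk(\mathcal{C}_{s,h,k})$ in closed form and then translate the two qOAC conditions into an arithmetic inequality on $h$, $k$, and $m$.

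A parameter count from Definition~\ref{specialcodes} gives $\dim(\mathcal{C}_{s,h,k}) = sm + hk$. For the maximum rank, I would prove
$$\maxrk(\mathcal{C}_{s,h,k}) = s + \min(h,k).$$
For the upper bound, split the row space of any $M \in \mathcal{C}_{s,h,k}$ as $R_1 + R_2$, where $R_1$ is spanned by the top $s$ rows (so $\dim R_1 \leq s$) and $R_2$ by the middle $h$ rows; since $R_2 \subseteq \langle f_1, \ldots, f_k \rangle$, we have $\dim R_2 \leq \min(h,k)$. For the lower bound, I would use the hypothesis $s+h\leq n\leq m$ and exhibit an explicit $M$: when $h\leq k$, take the middle rows to be $f_1, \ldots, f_h$ and the top rows $f_{h+1}, \ldots, f_{h+s}$; when $h>k$, take the first $k$ middle rows to be $f_1, \ldots, f_k$ (the remaining $h-k$ zero) and the top rows $f_{k+1}, \ldots, f_{k+s}$, which fit in $\F_q^m$ because $s+k\leq s+h\leq m$. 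In either case the nonzero rows are $s + \min(h,k)$ distinct standard basis vectors, so $\rank(M) = s + \min(h,k)$.

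With the dimension and maximum rank in hand, $\mathcal{C}_{s,h,k}$ is a qOAC precisely when $m \nmid sm + hk$ and $s + \min(h,k) = \lceil (sm + hk)/m \rceil$, i.e., when $m \nmid hk$ and $\min(h,k) = \lceil hk/m \rceil$. Setting $a = \min(h,k)$ and $b = \max(h,k)$, these two conditions together are equivalent to the double strict inequality $(a-1)m < ab < am$. The right half is $b < m$, while, under $b < m$, the left half rearranges to $a(m-b) < m$, equivalently $a \leq \lfloor (m-1)/(m-b)\rfloor$. Conversely, whenever $0 < a \leq \lfloor(m-1)/(m-b)\rfloor$ the floor is well-defined, forcing $b < m$, and the inequality $a(m-b) \leq m-1$ combined with $b<m$ recovers $(a-1)m < ab < am$. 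This yields the stated characterization.

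The step I expect to be most delicate is the achievability half of the max-rank computation: the hypothesis provides $s+h \leq m$ but not $s+k \leq m$, so the two cases $h \leq k$ and $h > k$ must be treated separately to ensure that the standard basis rows chosen for the top block do not coincide with those of the middle block while the middle rows remain inside the admissible support $\langle f_1, \ldots, f_k \rangle$. Once this is settled, the rest is routine manipulation of ceiling and floor functions.
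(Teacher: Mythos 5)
Your proposal is correct and follows essentially the same route as the paper: compute $\dim(\mathcal{C}_{s,h,k})=sm+hk$ and $\maxrk(\mathcal{C}_{s,h,k})=s+\min\{h,k\}$, reduce the qOAC conditions to $m\nmid hk$ and $\min\{h,k\}=\lceil hk/m\rceil$, and rearrange to the stated floor inequality. The only difference is that you spell out the achievability of the maximum rank and the equivalence $(a-1)m<ab<am \Leftrightarrow 0<a\leq\lfloor(m-1)/(m-b)\rfloor$, which the paper leaves as "a computation shows"; both of your verifications are sound.
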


\begin{proof}
The code $\mathcal{C}$ has dimension $\dim(\mathcal{C})=sm+hk$ and maximum rank $s+\min\{ h,k\}$. Then $\mathcal{C}$ is a qOAC iff $m\nmid hk$ and
$$\left\lceil \frac{sm + hk}{ m}\right\rceil = s+\min\{h,k\}  \quad \Longleftrightarrow \quad \left\lceil \frac{hk}{ m}  \right\rceil = \min\{h,k\}. $$
Therefore, $\mathcal{C}$ is a qOAC iff $\min\{h,k\} >0$ and $(\min\{h,k\}-1)m+1 \leq hk \leq \min\{h,k\}m-1$. A computation shows that $\mathcal{C}$ is a qOAC iff 
$$0<\min\{h,k\} \leq\left\lfloor \frac{m-1}{m-\max\{h,k\}} \right\rfloor.$$ Notice that $\max\{h,k\}<m$, since $m\nmid hk$.
\end{proof}

\begin{example}\label{newqOAC}
Let $0<k<m$ and let
$$\mathcal{C}=\left\{ \begin{pmatrix}
A & 0_{n \times (m-k)} \\
\end{pmatrix} : A\in \F_q^{n\times k}, \right\}\subseteq\mat.$$
By Theorem~\ref{corollaryC}, $\mathcal{C}$ is a dually qOAC if and only if $m=n+1$.
By Proposition~\ref{structure}, $\mathcal{C}$ is a qOAC if and only if 
$$k\leq\min\left\{\left\lfloor\frac{m-1}{m-n}\right\rfloor, n\right\}\ \mbox{ or }\ n\leq\min\left\{\left\lfloor\frac{m-1}{m-k}\right\rfloor, k\right\}.$$
\end{example}

In the following sections, we compute the invariants of the codes from Definition \ref{specialcodes}. Since this does not affect the computation of the invariants, we always assume that $n=s+h$. This amounts to considering the codes $C_{s,k,n-s}=\Mat(\langle e_1,\ldots,e_s\rangle)+\Mat(\langle e_1,\ldots,e_k\rangle)^t$.

\section{Generalized weights}\label{Secgenweights}

In this section we compute the generalized weights of codes of the form $\Mat(\langle e_1,\ldots,e_s\rangle)+\Mat(\langle e_1,\ldots,e_k\rangle)^t$. For what discussed in the previous section, this determines the weights of all dually qOACs and of certain qOACs. We start by recalling the definition of generalized weights.

\begin{definition}\label{defgw}(\cite{ACapproach}, Definition~23)
Let $\mathcal{C} \subseteq \mat$ be a rank-metric code. For $1\leq i\leq\dim(\C)$, the $i$th \textbf{generalized weight} of $\mathcal{C}$ is
$$a_i(\mathcal{C}) = \displaystyle \frac{1}{m} \min  \{ \dim(\mathcal{A}) \, : \, \mathcal{A} \subseteq \mat \mbox{ is an optimal anticode with } \dim(\mathcal{A} \cap \mathcal{C}) \geq i \}.$$
\end{definition}

Generalized weights were defined in \cite{ACapproach}, where some of their basic properties were also established. A different - but related - definition of generalized weights was given in \cite{MMP}.

\begin{comment}
The next lemma will be useful in the proof of Theorem \ref{theorem}. 

\begin{lemma}\label{Lemma} Let $0\leq s\leq n$ and let $S\subseteq([n]\setminus[s])\times[m]$ be such that for each $i\in[n]\setminus[s]$ there exists a $j_i\in[m]$ such that $(i,j_i)\in S$. Let
$$\mathcal{C}=\langle E_{i,j}\mid (i,j)\not\in S\rangle.$$
Then the maximum dimension of an optimal anticode contained in $\mathcal{C}$ is $sm$.
\end{lemma}

\begin{proof} 
The code $\mathcal{C}\supseteq\Mat(\langle e_1,\ldots,e_s\rangle)$, which is an optimal anticode of dimension $sm$. Let $\Mat(V)$ be an optimal anticode with $\dim(V)>s$ and let $v\in V\setminus\langle e_1,\ldots,e_s\rangle$. Let $M\in\F_q^{n\times m}$ be the matrix with all columns equal to $v$. Then $M\in\Mat(V)\setminus \mathcal{C}$, hence $\Mat(V)\not\subseteq \mathcal{C}$.
\end{proof}
\end{comment}

We start with a simple result, which will be used in the proof of Theorem \ref{theorem}.

\begin{proposition} \label{minmax}
Let $\mathcal{C}\subseteq \mat$ be a rank-metric code which contains an optimal anticode  of dimension $sm$ for some $1 \leq s \leq n$. Then 
$$a_{(i-1)m + 1}(\mathcal{C})=\cdots=a_{i m}(\mathcal{C})=i \mbox{ for } 1 \leq i \leq s.$$
\end{proposition}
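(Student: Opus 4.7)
The plan is to prove the equalities by a two-sided bound, using monotonicity of the generalized weights as a bridge. Specifically, I will show that $a_{(i-1)m+1}(\mathcal{C})\geq i$ (lower bound) and $a_{im}(\mathcal{C})\leq i$ (upper bound), and then invoke monotonicity
\[
a_{(i-1)m+1}(\mathcal{C})\leq a_{(i-1)m+2}(\mathcal{C})\leq\cdots\leq a_{im}(\mathcal{C}),
\]
which is immediate from Definition~\ref{defgw} since strengthening the condition $\dim(\mathcal{A}\cap\mathcal{C})\geq j$ can only shrink the set over which the minimum is taken.

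For the upper bound, I use the hypothesis: there exists an optimal anticode $\mathcal{A}_0\subseteq\mathcal{C}$ of dimension $sm$. By the classification of optimal anticodes, $\mathcal{A}_0=\Mat(V)$ for some $V\subseteq\F_q^n$ of dimension $s$ (or $\mathcal{A}_0=\Mat(V)^t$ in the square case). For any $1\leq i\leq s$, pick $V'\subseteq V$ of dimension $i$; then $\mathcal{A}'=\Mat(V')$ (resp.\ $\Mat(V')^t$) is an optimal anticode of dimension $im$ contained in $\mathcal{A}_0\subseteq\mathcal{C}$. Hence $\dim(\mathcal{A}'\cap\mathcal{C})=im\geq im$, so $a_{im}(\mathcal{C})\leq\frac{1}{m}\dim(\mathcal{A}')=i$.

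For the lower bound, let $\mathcal{A}\subseteq\mat$ be any optimal anticode with $\dim(\mathcal{A}\cap\mathcal{C})\geq(i-1)m+1$. Then $\dim(\mathcal{A})\geq(i-1)m+1$. The key point is that the dimension of an optimal anticode is always a multiple of $m$, since $\dim(\mathcal{A})=m\cdot\maxrk(\mathcal{A})$. Therefore $\dim(\mathcal{A})\geq im$, which gives $a_{(i-1)m+1}(\mathcal{C})\geq i$.

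Combining the two bounds with monotonicity yields $i\leq a_{(i-1)m+1}(\mathcal{C})\leq a_{im}(\mathcal{C})\leq i$, so all of $a_{(i-1)m+1}(\mathcal{C}),\ldots,a_{im}(\mathcal{C})$ equal $i$. There is no real obstacle here; the only substantive ingredient is the divisibility $m\mid\dim(\mathcal{A})$ for optimal anticodes, which was recalled in the introduction and is what forces the $m$-periodic "staircase" behavior of the generalized weights.
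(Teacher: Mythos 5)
Your proof is correct and follows essentially the same sandwich argument as the paper: an upper bound $a_{im}(\mathcal{C})\leq i$ coming from an optimal anticode of dimension $im$ contained in $\mathcal{C}$, a lower bound from general properties of generalized weights, and monotonicity to fill in the whole chain. The only difference is that where the paper cites \cite[Theorem 30]{ACapproach} for the lower bound, you derive it inline from the divisibility $m\mid\dim(\mathcal{A})$ for optimal anticodes, which is a self-contained version of the same fact.
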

 
\begin{proof}
Fix $0\leq i\leq s$ and let $\A$ be an optimal anticode of dimension $\dim(\A)=im$ such that $\A \subseteq \mathcal{C}$. Then $a_{im}(\mathcal{C}) = i$, since $a_{im}(\mathcal{C}) \geq i$ by \cite[Theorem 30]{ACapproach} and $a_{im}(\C) \leq i$ since $\C\supseteq\A$. Using the properties of generalized weights from \cite[Theorem 30]{ACapproach}, one easily obtains that
$$a_{(i-1)m + 1}(\mathcal{C}) = \cdots = a_{im}(\mathcal{C}) = i.$$
\end{proof}

We are now ready to state the main theorem of this section. The next theorem computes the generalized weights of all dually qOACs and some qOACs.

\begin{theorem} \label{theorem}
Let $\mathcal{C}=\Mat(\langle e_1,\ldots,e_s\rangle)+\Mat(\langle e_1,\ldots,e_k\rangle)^t\subseteq\mat$. Then
$$\begin{array}{clcll}
a_{(i-1)m + 1}(\mathcal{C})& =\cdots=& a_{i m}(\mathcal{C})& =i & \mbox{ for } 1 \leq i \leq s, \\
a_{sm + (i-1)k+1}(\mathcal{C})& =\cdots=& a_{sm + ik}(\mathcal{C}) & =s+i & \mbox{ for } 1 \leq i \leq n-s.
\end{array}$$
\end{theorem}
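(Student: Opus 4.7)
The plan is to verify the weights at the right endpoints of each plateau, namely $a_{im}(\mathcal{C})=i$ for $1\le i\le s$ and $a_{sm+ik}(\mathcal{C})=s+i$ for $1\le i\le n-s$, and then use monotonicity of $\{a_j(\mathcal{C})\}_j$ to fill the constant stretches between them. The first line is immediate from Proposition~\ref{minmax} applied to the optimal anticode $\Mat(\langle e_1,\ldots,e_s\rangle)\subseteq\mathcal{C}$ of dimension $sm$, so the work concentrates on the second line.

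For the upper bound $a_{sm+ik}(\mathcal{C})\le s+i$, I would test $\mathcal{C}$ against the optimal anticode $\mathcal{A}_i:=\Mat(\langle e_1,\ldots,e_{s+i}\rangle)$ of dimension $(s+i)m$. Direct inspection shows that $\mathcal{A}_i\cap\mathcal{C}$ consists of matrices whose first $s$ rows are free (contributing $sm$ dimensions), whose rows $s+1,\ldots,s+i$ are supported on the first $k$ columns (contributing $ik$ dimensions), and whose remaining rows vanish. Hence $\dim(\mathcal{A}_i\cap\mathcal{C})=sm+ik$, and Definition~\ref{defgw} yields the bound.

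The matching lower bound $a_{sm+(i-1)k+1}(\mathcal{C})\ge s+i$ is the technical heart of the proof and amounts to showing that every optimal anticode $\mathcal{A}$ with $\dim(\mathcal{A})\le (s+i-1)m$ satisfies $\dim(\mathcal{A}\cap\mathcal{C})\le sm+(i-1)k$. For $\mathcal{A}=\Mat(V)$ with $t:=\dim(V)\le s+i-1$, the key structural step is a column-by-column decomposition of the intersection: columns $1,\ldots,k$ range freely over $V$, while columns $k+1,\ldots,m$ must additionally vanish on rows $s+1,\ldots,n$ and therefore lie in $W:=V\cap\langle e_1,\ldots,e_s\rangle$. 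This identifies
\[
\dim(\mathcal{A}\cap\mathcal{C})=kt+(m-k)\dim(W),
\]
after which $\dim(W)\le\min(t,s)$ gives, in the subcase $t\le s$, a count of at most $mt\le sm$, and in the subcase $t>s$ a count of at most $kt+(m-k)s=sm+k(t-s)\le sm+(i-1)k$. A symmetric row-by-row argument, using $V\cap\langle e_1,\ldots,e_k\rangle$ in place of $W$, handles the transpose anticodes $\Mat(V)^t$ that can arise when $m=n$.

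The main obstacle is precisely this column-wise bookkeeping: one must recognise that the single constraint \emph{bottom-right block zero} translates into the cleanest possible restriction on the individual columns, namely membership in $W$ rather than in $V$, which is what makes the subcase analysis yield exactly the target bound $sm+(i-1)k$. Once both bounds are in place, monotonicity of $\{a_j(\mathcal{C})\}_j$ squeezes every intermediate index into equality: for $sm+(i-1)k+1\le j\le sm+ik$ one has $s+i\le a_{sm+(i-1)k+1}(\mathcal{C})\le a_j(\mathcal{C})\le a_{sm+ik}(\mathcal{C})\le s+i$, completing the proof.
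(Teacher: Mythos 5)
Your handling of the column-type anticodes is correct, and at the key step it takes a genuinely different (and arguably cleaner) route than the paper: you compute $\dim(\Mat(V)\cap\mathcal{C})=kt+(m-k)\dim(V\cap\langle e_1,\ldots,e_s\rangle)$, $t=\dim(V)$, \emph{exactly}, by decomposing column by column, whereas the paper picks $i$ independent vectors $v_1,\ldots,v_i\in V\setminus\langle e_1,\ldots,e_s\rangle$, places each of them in each of the last $m-k$ column positions to produce an $i(m-k)$-dimensional subspace of $\Mat(V)$ meeting $\mathcal{C}$ trivially, and then deduces the inequality $\dim(\Mat(V)\cap\mathcal{C})\leq sm+ik$ for $\dim(V)=s+i$ from the dimension formula for $\mathcal{C}+\mathcal{A}$. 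Your exact count subsumes the paper's inequality; the upper bounds via $\Mat(\langle e_1,\ldots,e_{s+i}\rangle)$ and the endpoint-plus-monotonicity scaffolding coincide with the paper's proof.

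The gap is your final claim that a ``symmetric row-by-row argument'' handles the transpose anticodes when $m=n$, and it is not a repairable bookkeeping slip. For $\mathcal{A}=\Mat(V)^t$ the row count gives $\dim(\mathcal{A}\cap\mathcal{C})=st+(n-s)\dim(V\cap\langle e_1,\ldots,e_k\rangle)$, but the cap on the second summand is now $\min\{t,k\}$, not $\min\{t,s\}$, and for $k>s$ the case analysis collapses: taking $V\subseteq\langle e_1,\ldots,e_k\rangle$ with $s<t\leq k$ gives $\Mat(V)^t\subseteq\Mat(\langle e_1,\ldots,e_k\rangle)^t\subseteq\mathcal{C}$, an optimal anticode of dimension $mt$ lying \emph{entirely inside} $\mathcal{C}$, whence $a_{mt}(\mathcal{C})\leq t$, strictly below the value $s+\lceil m(t-s)/k\rceil$ asserted by the theorem (strictly, since $k<m$). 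Concretely, for $m=n=4$, $s=1$, $k=3$, $V=\langle e_1,e_2\rangle$ one gets $a_{8}(\mathcal{C})\leq 2$, while the displayed formula claims $a_{8}(\mathcal{C})=3$. Equivalently: generalized weights are invariant under transposition when $m=n$, and $\mathcal{C}^t$ is the code with $s$ and $k$ interchanged, so the stated formula cannot hold simultaneously for $\mathcal{C}$ and $\mathcal{C}^t$ unless $s=k$. So in the square case the statement itself fails for $k>s$ --- a range containing genuine dually qOACs, e.g.\ case (c) of Theorem~\ref{corollaryC} with $m=n$ --- and no symmetric argument can close it. You are in good company: the paper's own proof tests only anticodes of the form $\Mat(V)$, which exhausts all optimal anticodes only when $n<m$, so it has exactly the same blind spot, silently rather than explicitly. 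For $n<m$ your proof is complete; for $n=m$ your computation does settle the transpose case under the extra hypothesis $k\leq s$, since then $st+(n-s)\dim(V\cap\langle e_1,\ldots,e_k\rangle)\leq st+(n-s)k\leq sm+(i-1)k$ for all $t\leq s+i-1$ and $i\leq n-s$.
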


\begin{proof}
Since $\mathcal{C}\supseteq \Mat(\langle e_1,\ldots,e_s\rangle)$, then by Proposition \ref{minmax} 
$$a_{(i-1)m + 1}(\mathcal{C}) = \cdots = a_{i m }(\mathcal{C}) = i \quad \mbox{ for } 1 \leq i \leq s.$$
Fix $1 \leq i \leq n-s$ and let $\A=\Mat(V)\subseteq \mat$ be an optimal anticode with $V\subseteq\F_q^n$ of $\dim(V)=s+i$. We claim that $\dim(\C\cap\A)\leq sm+ik$.
Since $\dim(V)=s+i$, then there exist $i$ linearly independent vectors $v_1,\ldots,v_i\in V\setminus\langle e_1,\ldots,e_s\rangle$. Consider the matrices
$$M_j^1 = \begin{pmatrix}
0_{n \times k} & v_j & 0_{n \times m -k-1} \\
\end{pmatrix}, \ M_j^2 = \begin{pmatrix}
0_{n \times k + 1} & v_j & 0_{n \times m -k-2} \\
\end{pmatrix}, \ \ldots \ , M_j^{m-k} = \begin{pmatrix}
0_{n \times m-1} & v_j \\
\end{pmatrix}$$ 
for $ 1 \leq j \leq i$. Clearly 
$\langle M_1^1,\ldots,M_i^{1},\cdots,M_1^{m-k},\ldots,M_i^{m-k}\rangle\subseteq\A$. Moreover
\begin{equation} \label{11.1}
\mathcal{C} \ \cap \ \langle M_1^1 , \ldots , M_i^{1}, \cdots , M_1^{m-k}, \ldots , M_i^{m-k} \rangle =0.
\end{equation}
Hence 
$$ \begin{array}{ccl}  
\dim(\mathcal{C}) + \dim(\A) - \dim(\mathcal{C} \cap \A)& = & \dim(\mathcal{C} + \A) \\
&\geq & \dim(\mathcal{C}) + \dim (\langle M_1^1 , \ldots , M_i^{1}, \cdots , M_1^{m-k}, \ldots , M_i^{m-k}\rangle ),\\
\end{array}
$$
which proves our claim that
$$\dim(\mathcal{C} \cap \A) \leq sm + ik.$$
It follows directly that
\begin{equation} \label{11.2}
a_{sm+ik+1}(\mathcal{C})\geq s+i+1 \mbox{ for } 1\leq i\leq n-s-1.
\end{equation}
The inequality \begin{equation} \label{11.3}
a_{sm+1}(\C)\geq s+1\end{equation} 
follows from \cite[Theorem 30]{ACapproach}.

In order to prove that all the previous inequalities are in fact equalities, consider $\A=\Mat(\langle e_1,\ldots,e_{s+i}\rangle)$. It is easy to check that $\dim(\mathcal{C}\cap\A)=sm+ik $. 
Therefore $a_{sm +ik}(\mathcal{C})\leq s+i$ which, together with (\ref{11.2}) and (\ref{11.3}) yields
$$a_{sm+(i-1)k+1}(\mathcal{C})=\cdots=a_{sm+ik}(\mathcal{C})=s+i+1,$$
for $1\leq i\leq n-s$.
\end{proof}

\section{Rank distribution}\label{SecRankdistr}

In this section we compute the rank distribution of codes of the form $\Mat(\langle e_1,\ldots,e_s\rangle)+\Mat(\langle e_1,\ldots,e_k\rangle)^t$. For what discussed in Section \ref{SecqOACs}, this determines the rank distribution of all dually qOACs and of certain qOACs. We start by recalling a basic result from linear algebra.

\begin{definition}
The {\bf Gaussian $q$-binomial coefficient} is the integer $$\binom{n}{r}_q = \frac{(q^n - 1)(q^n-q)\cdots (q^n-q^{r-1})}{(q^r - 1)(q^r-q)\cdots (q^r-q^{r-1})}.$$
\end{definition}

\begin{lemma}  \label{rank_k}
The number of rank $r$ matrices in $\mat$ is 
$$\phi_q(n,m,r)=\binom{n}{r}_q (q^m -1)(q^m-q)\cdots (q^m - q^{r-1}).$$
\end{lemma}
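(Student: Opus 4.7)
The plan is to count rank-$r$ matrices in $\mat$ by first stratifying according to column space, and then counting matrices with a prescribed $r$-dimensional column space.

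First, I would observe that the set of rank-$r$ matrices in $\mat$ is partitioned by the column space, and a matrix has rank $r$ precisely when its column space has dimension $r$. The number of $r$-dimensional subspaces of $\F_q^n$ is, by definition, the Gaussian binomial coefficient $\binom{n}{r}_q$. So it suffices to show that for every fixed $V\subseteq\F_q^n$ with $\dim(V)=r$, the number of matrices $M\in\mat$ with $\colsp(M)=V$ equals $(q^m-1)(q^m-q)\cdots(q^m-q^{r-1})$.

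Next, I would fix such a $V$ and a basis of $V$, which I arrange as the columns of a matrix $B\in\F_q^{n\times r}$ of rank $r$. The key claim is that the map $C\longmapsto BC$ gives a bijection from the set of $r\times m$ matrices of rank $r$ onto the set of $M\in\mat$ with $\colsp(M)=V$. Injectivity follows because $B$ has a left inverse (being of full column rank); surjectivity is immediate, since every column of such an $M$ lies in $V$ and therefore has a unique expression as a linear combination of the columns of $B$, and $\colsp(BC)=V$ exactly when $C$ has rank $r$ (otherwise the columns of $BC$ span only a proper subspace of $V$). So the problem reduces to counting rank-$r$ matrices in $\F_q^{r\times m}$.

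Finally, I would count the rank-$r$ matrices in $\F_q^{r\times m}$ row by row: such a matrix is determined by an ordered choice of $r$ linearly independent rows in $\F_q^m$, and the standard incremental count gives $(q^m-1)(q^m-q)\cdots(q^m-q^{r-1})$ possibilities (the $i$th row must avoid the span of the previous $i-1$, of size $q^{i-1}$). Multiplying by $\binom{n}{r}_q$ yields the claimed formula $\phi_q(n,m,r)$.

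There is no real obstacle here; the only delicate point is justifying that $B C$ indeed has column space equal to $V$ (not merely contained in $V$) if and only if $C$ has rank $r$, which I would state explicitly using the fact that $B$ restricts to an isomorphism $\F_q^r\xrightarrow{\sim}V$, so $\colsp(BC)$ is the image under this isomorphism of $\colsp(C)\subseteq\F_q^r$.
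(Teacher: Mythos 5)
Your proof is correct, and it is the standard argument. The paper states this lemma without proof, recalling it as a basic fact from linear algebra, so there is no proof of the authors' to compare against; your stratification by column space, the bijection $C\mapsto BC$ between rank-$r$ matrices in $\F_q^{r\times m}$ and matrices with $\colsp(M)=V$ (with the genuinely necessary check that $\colsp(BC)=V$ exactly when $\rank(C)=r$, which you supply), and the row-by-row count $(q^m-1)(q^m-q)\cdots(q^m-q^{r-1})$ together give a complete and correct derivation of $\phi_q(n,m,r)$.
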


\begin{theorem}
Let $\mathcal{C}=\Mat(\langle e_1,\ldots,e_s\rangle)+\Mat(\langle e_1,\ldots,e_k\rangle)^t$. For $0\leq r\leq n$ denote by $A_r$ the number of elements of rank $r$ in $\C$. Then 
$$A_r = \phi_q(s,m,r) + \sum_{i=1}^{\min\{n-s,k,r\}} \binom{k}{i}_q  \binom{m-i}{r-i}_q\   q^{si}\  \prod_{t=0}^{i-1}(q^{n-s} - q^t)\prod_{j=0}^{r-i-1}  (q^{s} - q^j),$$
for $1\leq r \leq \min\{s+k,n\}$. 
\end{theorem}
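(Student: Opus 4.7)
The plan is to condition on $i=\rank(C)$ where $C$ is the $(n-s)\times k$ bottom-left block of a matrix
$$M=\begin{pmatrix} A & B \\ C & 0_{(n-s)\times(m-k)} \end{pmatrix}\in\mathcal{C},$$
with $A\in\F_q^{s\times k}$, $B\in\F_q^{s\times(m-k)}$. For each value of $i$, I will count pairs $(A,B)$ that make $\rank(M)=r$. The $i=0$ case (i.e.\ $C=0$) immediately gives the $\phi_q(s,m,r)$ term, since then $M$ has its last $n-s$ rows equal to zero and is therefore any $s\times m$ matrix of rank $r$.

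The key step is the following rank computation. Fix $C$ of rank $i\geq 1$ and let $V=\rowsp(C)\subseteq\F_q^k$, so $\dim V=i$. Identify $\widetilde V=V\oplus 0\subseteq\F_q^k\oplus\F_q^{m-k}=\F_q^m$, which is exactly $\rowsp(C\mid 0)$. Since the row space of $M$ equals $\widetilde V+\rowsp(A\mid B)$, we obtain
$$\rank(M)=i+\dim\bigl(\text{image of }\rowsp(A\mid B)\text{ in }\F_q^m/\widetilde V\bigr).$$
Choosing a basis of $\F_q^m/\widetilde V\cong(\F_q^k/V)\oplus\F_q^{m-k}$, this image is the row space of an $s\times(m-i)$ matrix $(\overline A\mid B)$, where $\overline A$ is the row-wise reduction of $A$ modulo $V$. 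So the condition $\rank(M)=r$ is equivalent to $\rank(\overline A\mid B)=r-i$.

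Next I count. The projection $A\mapsto\overline A$ is surjective with fibres of size $|V|^s=q^{si}$, so for a fixed $C$ the number of admissible $(A,B)$ equals $q^{si}\,\phi_q(s,m-i,r-i)$. Summing over all $C$ of rank $i$, which by Lemma~\ref{rank_k} number $\phi_q(n-s,k,i)$, and then over all admissible $i$ (noting $\phi_q$ vanishes outside the stated range so the sum automatically truncates at $\min\{n-s,k,r\}$), gives
$$A_r=\sum_{i=0}^{\min\{n-s,k,r\}}\phi_q(n-s,k,i)\,q^{si}\,\phi_q(s,m-i,r-i).$$
Finally, I apply the standard symmetry $\phi_q(a,b,c)=\phi_q(b,a,c)$ to write $\phi_q(n-s,k,i)=\binom{k}{i}_q\prod_{t=0}^{i-1}(q^{n-s}-q^t)$, expand $\phi_q(s,m-i,r-i)=\binom{m-i}{r-i}_q\prod_{j=0}^{r-i-1}(q^s-q^j)$, and split off the $i=0$ term, which produces the formula in the statement.

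The only real subtlety is justifying the rank decomposition $\rank(M)=i+\rank(\overline A\mid B)$ cleanly and verifying that the count of preimages $A\mapsto\overline A$ is genuinely $q^{si}$ independently of $C$; once this is in place the rest is mechanical manipulation of $q$-binomial coefficients and the symmetry of $\phi_q$.
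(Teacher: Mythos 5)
Your proof is correct, but it runs the count along a genuinely different route than the paper's, even though both condition on the same parameter $i$ (your $\rank(C)$ equals the paper's $\dim V$, the dimension of the row space of the last $n-s$ rows) and both split off $i=0$ as the $\phi_q(s,m,r)$ term. The paper enumerates in three stages: the $\binom{k}{i}_q$ choices of an $i$-dimensional $V\subseteq\F_q^k$, the $\binom{m-i}{r-i}_q$ choices of an $r$-dimensional row space $U\supseteq V$ in $\F_q^m$, and then the matrices realizing this pair via the factorization $M=D\cdot N$, where $N$ has rows $u_1,\ldots,u_r$ extending a basis of $V$ and the blocks $B,C$ of the coefficient matrix $D$ must have full rank; each factor of the displayed formula arises directly as one of these counts. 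You instead sum over the bottom-left blocks $C$ themselves (there are $\phi_q(n-s,k,i)$ of rank $i$, by Lemma~\ref{rank_k} applied to $\F_q^{(n-s)\times k}$), and you replace the paper's enumeration of $U$ and $D$ by the quotient identity $\rank(M)=i+\rank(\overline{A}\mid B)$ together with the fiber count $q^{si}$ for $A\mapsto\overline{A}$, which reduces the inner count to the known quantity $\phi_q(s,m-i,r-i)$; you then need the transpose symmetry $\phi_q(a,b,c)=\phi_q(b,a,c)$, which the paper never uses, to bring $\phi_q(n-s,k,i)$ and $\phi_q(s,m-i,r-i)$ into the form of the statement. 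The two subtleties you flag are real but handled: the rank identity holds because $\rowsp(M)=\widetilde{V}+\rowsp(A\mid B)$ and the image of $\rowsp(A\mid B)$ in $\F_q^m/\widetilde{V}$ is exactly $\rowsp(\overline{A}\mid B)$, and the fibers of $A\mapsto\overline{A}$ have size $|V|^s=q^{si}$ independently of which rank-$i$ matrix $C$ produced $V$; moreover the automatic vanishing of $\phi_q$ both truncates the sum at $\min\{n-s,k,r\}$ and kills the terms with $r-i>s$ (via the factor $q^s-q^s$). In short, your argument is more modular, reusing Lemma~\ref{rank_k} twice and hiding the subspace bookkeeping in a quotient, at the cost of one extra standard identity; the paper's is a self-contained bijective count whose factors match the stated formula term by term.
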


\begin{proof}
The number of $n\times m$ matrices in $\C$ of rank $1\leq r\leq n$ is the sum of the number of matrices of rank $r$ in $\F_q^{s\times m}$ and the number of $n\times m$ matrices of rank $r$ in $\mat$ which have at least one nonzero row among the last $n-s$.  We denote the latter set of matrices by $\mathcal{C}_r$. 

The first number is $\phi_q(s,m,r)$ by Lemma \ref{rank_k}. We now compute the cardinality of $\mathcal{C}_r$.
Let $1 \leq i \leq \min\{n-s,k,r\}$ and let $V\subseteq \F_q^m$ of $\dim(V)=i$. 
Let $U \subseteq \F_q^m$ be an $r$-dimensional vector space containing $V$. Since $r$-dimensional vector spaces $U$ containing a given $i$-dimensional subspace are in one-to-one correspondence with $(r-i)$-dimensional vector spaces in $\F_q^{m-i}$, the number of $r$-dimensional vector spaces $U$ such that of $\F_q^m\supseteq U\supseteq V$ is 
\begin{equation} \label{rowspaces}
\binom{m-i}{r-i}_q.
\end{equation}

Next we determine the number of matrices with row space $U$ and such that the row space of the last $n-s$ rows is $V$. Complete a basis $u_1, \ldots , u_i$ of $V$ to a basis $u_1, \ldots , u_r$ of $U$.
We observe that every such $M \in \mathcal{C}_r$ is of the form
$$  M = D \cdot \begin{pmatrix}
u_1 \\
\vdots \\
u_r \\
\end{pmatrix} \in \mathcal{C}_r, \mbox{ where } D=\begin{pmatrix}
A & B  \\
C & 0_{(n-s) \times (r-i)} \\
\end{pmatrix},$$
$A \in \F_q^{s \times i}, B \in \F_q^{s \times (r-i)}, C \in \F_q^{(n-s)\times i}$. Moreover $B,C$, and $D$ must have full rank. Since there is a one-to-one correspondence between $M$ and $D$, the number of  matrices in $\mathcal{C}_r$ with row space $U$ and such that the last $n-s$ rows generate $V$ is
\begin{equation}\label{fullrankA}
 q^{si} \prod_{t=0}^{i-1} (q^{n-s} - q^t) \prod_{j=0}^{r-i-1} (q^s - q^j).
\end{equation}
The product of (\ref{rowspaces}) and (\ref{fullrankA}) is then the number of matrices in $\mathcal{C}_r$, whose last $n-s$ rows span $V$.

Finally, observe that for a given $1 \leq i \leq \min\{n-s,k,r\}$ there are
$$\binom{k}{i}_q$$ 
$i$-dimensional subspaces $V$ in $\F_q^{k}$. Hence the cardinality  of $\C_r$ is
$$\sum_{i=1}^{\min\{n-s,k,r\}} \binom{k}{i}_q  \binom{m-i}{r-i}_q   \, q^{si} \prod_{t=0}^{i-1}  (q^{h} - q^t) \prod_{j=0}^{r-i-1}  (q^{s} - q^j).$$ 
\end{proof}

\section{Associated $q$-Polymatroids}\label{SecPolymat}

$q$-polymatroids are the $q$-analogs of polymatroids. They were introduced indipendently by Shiromoto in \cite{Sh} and by Gorla, Jurrius, L\'opez Valdez, and Ravagnani in \cite{q-poly}. 
In this section we compute the rank functions of the $q$-polymatroids associated to codes of the form $\Mat(\langle e_1,\ldots,e_s\rangle)+\Mat(\langle e_1,\ldots,e_k\rangle)^t$. By Lemma \ref{lemma:reduction} this determines the $q$-polymatroids associated to all dually qOACs and to certain qOACs. 

We start by recalling the relevant definitions. The fact that the functions $\rho_c$ and $\rho_r$ define $q$-polymatroids is shown in \cite[Theorem 5.3]{q-poly}.

\begin{definition}\label{defpoly}
Let $\C\subseteq\mat$ be a rank-metric code, and let $J \subseteq \F_q^n$ and $K \subseteq \F_q^m$ be linear subspaces. The {\bf $q$-polymatroids} associated to $\C$
are $(\F_q^n,\rho_c)$ and $(\F_q^m,\rho_r)$, where
$$\rho_c(\mathcal{C},J)=
\frac{\dim(\mathcal{C})-\dim(\mathcal{C}\cap\Mat(J^\perp))}{m}$$ and
$$\rho_r(\mathcal{C},K)=
\frac{\dim(\mathcal{C})-\dim(\mathcal{C}\cap\Mat(K^\perp)^t)}{n}.$$
\end{definition}

The next proposition follows directly from the definition of the rank functions.\begin{comment} \eli{We should show how to compute $\maxrk(\C)$ from the $q$-polymatroid, else we cannot say that the $q$-polymatroid allows us to establish whether the code is a qOAC. Also, the proposition follows directly from the definition of the rank function.} 
\end{comment}

\begin{proposition}
Let $\mathcal{C}\subseteq\mat$ be a rank-metric code and assume that $m\nmid \dim(\mathcal{C})$. The following are equivalent:
\begin{enumerate}
\item $\mathcal{C}$ is a qOAC,
\item $\lceil \rho_c(\mathcal{C}, \F_q^n) \rceil = \maxrk(\mathcal{C})$,
\item $m=n$ and $\lceil\rho_r(\mathcal{C},\F_q^m)\rceil=\maxrk(\mathcal{C})$.
\end{enumerate}
\end{proposition}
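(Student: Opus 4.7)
The plan is to unfold Definition~\ref{defpoly} at the full ambient spaces, where both rank functions collapse to clean ratios depending only on $\dim(\mathcal{C})$. Since $(\F_q^n)^\perp = \{0\}$, we have $\Mat(\{0\}) = \{0\}$ and $\mathcal{C}\cap\Mat(\{0\}) = \{0\}$, which gives
\[
\rho_c(\mathcal{C}, \F_q^n) \;=\; \frac{\dim(\mathcal{C})}{m}.
\]
An analogous computation, using $(\F_q^m)^\perp = \{0\}$ together with the transposition convention $\Mat(\{0\})^t = \{0\}$, yields
\[
\rho_r(\mathcal{C}, \F_q^m) \;=\; \frac{\dim(\mathcal{C})}{n}.
\]

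With these two identities in hand, the equivalence (1) $\Leftrightarrow$ (2) is immediate: the standing hypothesis $m \nmid \dim(\mathcal{C})$ makes $\dim(\mathcal{C})/m$ non-integer, so $\lceil \rho_c(\mathcal{C}, \F_q^n) \rceil = \lceil \dim(\mathcal{C})/m \rceil$, and asking this ceiling to equal $\maxrk(\mathcal{C})$ is literally the content of Definition~\ref{def}.

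For (1) $\Leftrightarrow$ (3), one direction is immediate: if $m = n$ and $\lceil \rho_r(\mathcal{C}, \F_q^m) \rceil = \maxrk(\mathcal{C})$, then $\rho_r(\mathcal{C},\F_q^m) = \rho_c(\mathcal{C},\F_q^n)$, so the condition on $\rho_r$ reduces to the condition on $\rho_c$ and we fall back into (2) $\Leftrightarrow$ (1). For the converse, assuming $\mathcal{C}$ is a qOAC with $\dim(\mathcal{C}) = \alpha m + \rho$, $0 < \rho < m$, and $\maxrk(\mathcal{C}) = \alpha + 1$, I would examine when $\lceil (\alpha m + \rho)/n \rceil = \alpha + 1$ under the standing hypothesis $n \leq m$; combining the Anticode bound $\dim(\mathcal{C}) \leq m \cdot \maxrk(\mathcal{C})$ with the qOAC lower bound $\dim(\mathcal{C}) > m(\maxrk(\mathcal{C}) - 1)$ isolates the regime in which this ceiling equality holds and $m = n$ is forced. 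The whole argument is elementary ceiling arithmetic; no substantial obstacle arises beyond the bookkeeping of the two denominators $m$ and $n$ and the fact that $\max\{m,n\} = m$.
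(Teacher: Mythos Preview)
Your computation that $\rho_c(\mathcal{C},\F_q^n)=\dim(\mathcal{C})/m$ and $\rho_r(\mathcal{C},\F_q^m)=\dim(\mathcal{C})/n$, and with it the equivalence (1)$\Leftrightarrow$(2) and the implication (3)$\Rightarrow$(1), is correct and is exactly what the paper means when it says the proposition ``follows directly from the definition of the rank functions.''

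The gap is in your sketch of (1)$\Rightarrow$(3). You assert that ceiling arithmetic together with the Anticode bound will force $m=n$, but no such argument can succeed, because the implication is false as the proposition is literally written. The paper itself produces qOACs with $m\neq n$: the codes $\mathcal{C}_0$ of Example~\ref{ex:duals} work for any $n<m$, and Theorem~\ref{corollaryC}(d) gives dually qOACs with $m=n+1$. For any such code, condition (1) holds while the clause ``$m=n$'' in (3) fails outright. Concretely, your proposed bookkeeping would only yield $\lceil\dim(\mathcal{C})/n\rceil\geq\lceil\dim(\mathcal{C})/m\rceil=\alpha+1$ when $n<m$, with strict inequality perfectly possible; nothing in the Anticode bound or the qOAC inequalities pins down $n$. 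The honest conclusion is that (1)$\Leftrightarrow$(2) holds unconditionally, while (3) is equivalent to them only under the additional hypothesis $m=n$. You should flag this defect in the statement rather than try to manufacture a proof of an implication that does not hold.
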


The next lemmas will be used in the proof of Theorem \ref{thq-poly}.

\begin{lemma}\label{Ccapw}
Let $h\geq 0$, $0\leq k\leq m$, and $0\leq s\leq m-h$.
Let $\mathcal{C}=\Mat(\langle e_1,\ldots,e_s\rangle)+\Mat(\langle e_1,\ldots,e_k\rangle)^t\subseteq\F_q^{(s+h)\times m}$ be a rank-metric code. Let $J\subseteq\F_q^{s+h}$ and $K\subseteq\F_q^m$. Then:
\begin{itemize}
\item[(a)] If $J\cap\langle e_1,\ldots,e_s\rangle=0$, then
$\dim(\mathcal{C} \cap \Mat(J))=k\cdot\dim(J).$
\item[(b)] If $K\cap\langle e_1,\ldots,e_k\rangle=0$, then
$\dim(\mathcal{C}\cap\Mat(K)^t)=s\cdot\dim(K).$
\end{itemize}
\end{lemma}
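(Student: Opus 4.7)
The plan is to unfold the two sides of each equality by writing $\mathcal{C}$ explicitly in block form. Recall that $\Mat(\langle e_1,\ldots,e_s\rangle)$ is the set of $(s+h)\times m$ matrices whose last $h$ rows vanish, and $\Mat(\langle e_1,\ldots,e_k\rangle)^t$ is the set of $(s+h)\times m$ matrices whose last $m-k$ columns vanish. Summing, $\mathcal{C}$ consists exactly of matrices of the form
\[
M=\begin{pmatrix} A & B \\ C & 0_{h\times(m-k)}\end{pmatrix},\qquad A\in\F_q^{s\times k},\ B\in\F_q^{s\times(m-k)},\ C\in\F_q^{h\times k}.
\]
Since (a) and (b) are dual statements under transposition (acting on columns vs.\ rows, with the distinguished subspaces $\langle e_1,\ldots,e_s\rangle\subseteq\F_q^{s+h}$ vs.\ $\langle e_1,\ldots,e_k\rangle\subseteq\F_q^m$), it suffices to explain part (a) in detail; part (b) follows by the exact same argument applied to rows.

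For part (a), the key observation is to introduce the projection $\pi\colon\F_q^{s+h}\to\F_q^h$ killing the first $s$ coordinates. Its kernel is $\langle e_1,\ldots,e_s\rangle$, so the hypothesis $J\cap\langle e_1,\ldots,e_s\rangle=0$ is exactly the statement that $\pi|_J$ is injective, and hence $\dim\pi(J)=\dim J$. Now take $M\in\mathcal{C}\cap\Mat(J)$ written in the block form above, so that every column of $M$ lies in $J$. Each of the last $m-k$ columns has the shape $\binom{B_{\cdot j}}{0}$, which lies in $\ker\pi$; by injectivity of $\pi|_J$ it can belong to $J$ only if it is zero, forcing $B=0$. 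For each of the first $k$ columns $\binom{A_{\cdot j}}{C_{\cdot j}}\in J$, the lower part $C_{\cdot j}$ must lie in $\pi(J)$, and then $A_{\cdot j}$ is uniquely determined as the unique preimage under $\pi|_J$.

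Counting gives $q^{\dim J}$ free choices for each of the $k$ columns of $C$, with $A$ determined and $B=0$, so $|\mathcal{C}\cap\Mat(J)|=q^{k\dim(J)}$, yielding (a). For (b), replace $\pi$ by the projection $\sigma\colon\F_q^m\to\F_q^{m-k}$ killing the first $k$ coordinates; the hypothesis on $K$ makes $\sigma|_K$ injective, the last $h$ rows of any $M\in\mathcal{C}\cap\Mat(K)^t$ are forced to be zero by the same injectivity argument, and each of the first $s$ rows contributes $q^{\dim K}$ freedom. There is no real obstacle here: the proof is essentially bookkeeping with block matrices, and the only thing one must keep straight is the convention that $\Mat(\cdot)^t$ refers to the row-support version of Definition~\ref{supportspace}, so that the distinguished subspaces live in $\F_q^{s+h}$ in (a) and in $\F_q^m$ in (b).
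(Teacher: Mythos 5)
Your proof is correct and takes essentially the same approach as the paper: both arguments hinge on the observation that for $M\in\mathcal{C}\cap\Mat(J)$ the last $m-k$ columns lie in $J\cap\langle e_1,\ldots,e_s\rangle=0$ (your injectivity of $\pi|_J$ is a rephrasing of exactly this), after which $\mathcal{C}\cap\Mat(J)=\Mat(\langle e_1,\ldots,e_k\rangle)^t\cap\Mat(J)$ and the count $k\cdot\dim(J)$ follows, with part (b) handled symmetrically. Your column-by-column parametrization via $\pi(J)$ and unique preimages is just a slightly more explicit bookkeeping of the paper's direct identification.
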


\begin{proof}
We only prove the first statement, as the second is proved similarly. Let $M\in\mathcal{C}\cap \Mat(J)$. The last $m-k$ columns of $M$ belong to $J\cap \langle e_1,\ldots,e_s\rangle$, hence they are zero. Therefore $\mathcal{C}\cap\Mat(J)\subseteq\Mat(\langle e_1,\ldots,e_k\rangle)^t\cap\Mat(J)\subseteq \mathcal{C}\cap\Mat(J),$
where the second inclusion follows from $\mathcal{C}\supseteq\Mat(\langle e_1,\ldots,e_k\rangle)^t$.
It follows that $$\mathcal{C}\cap\Mat(J)=\Mat(\langle e_1,\ldots,e_k\rangle)^t\cap\Mat(J),$$
in particular $\dim(\mathcal{C} \cap \Mat(J))=k\cdot\dim(J)$. 
\end{proof}

The proof of the next lemma is immediate.

\begin{lemma}\label{lemma:reduction}
Let $V\subseteq\F_q^n$ and let $\C\subseteq\Mat(V)\subseteq\mat$ be a rank-metric code. Then $$\C\cap\Mat(J)=\C\cap\Mat(J\cap V)$$ for any $J\subseteq\F_q^n$. In particular, if $\C\subseteq\Mat(\langle e_1,\ldots,e_\ell\rangle)$ for some $\ell\leq n$, then one can regard $\C$ as a rank-metric subcode of $\F_q^{\ell\times m}$ by deleting the last $n-\ell$ rows of each matrix. Moreover, the $q$-polymatroid $(\F_q^m,\rho_r)$ is left unchanged by this operation and the $q$-polymatroid $(\F_q^\ell,\rho_c)$ determines the $q$-polymatroid $(\F_q^n,\rho_c)$ according to the formula above. 
\end{lemma}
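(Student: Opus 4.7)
The plan is to first establish the set-theoretic identity $\C\cap\Mat(J)=\C\cap\Mat(J\cap V)$, and then to read off from it the two consequences about the associated $q$-polymatroids.

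For the identity, I would argue as follows: any $M\in\C\subseteq\Mat(V)$ satisfies $\colsp(M)\subseteq V$ by assumption. Hence for any $J\subseteq\F_q^n$, the condition $\colsp(M)\subseteq J$ is equivalent to $\colsp(M)\subseteq J\cap V$, so membership of $M$ in $\Mat(J)$ coincides with membership in $\Mat(J\cap V)$. This yields $\C\cap\Mat(J)=\C\cap\Mat(J\cap V)$. That single observation is the only computation; everything else in the lemma reduces to it.

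For the ``in particular'' part, suppose $\C\subseteq\Mat(\langle e_1,\ldots,e_\ell\rangle)$. Every matrix in $\C$ has its last $n-\ell$ rows equal to zero, so the map $\varphi:\C\to\F_q^{\ell\times m}$ that deletes those rows is an $\F_q$-linear bijection onto a subcode $\C'$, preserving rank, row space, and (under the identification of $\langle e_1,\ldots,e_\ell\rangle$ with $\F_q^\ell$) column space. Since the spaces $\Mat(K^\perp)^t$ appearing in the definition of $\rho_r$ are cut out purely by a condition on row spaces, and $\varphi$ leaves row spaces invariant, the $q$-polymatroid $(\F_q^m,\rho_r)$ of $\C$ agrees with that of $\C'$. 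For $(\F_q^n,\rho_c)$, applying the first part of the lemma with $V=\langle e_1,\ldots,e_\ell\rangle$ and the subspace $J^\perp$ gives $\C\cap\Mat(J^\perp)=\C\cap\Mat(J^\perp\cap\langle e_1,\ldots,e_\ell\rangle)$; transporting the right-hand side through $\varphi$ expresses $\rho_c(\C,J)$ in terms of the rank function of $\C'$ evaluated on the image of $J^\perp\cap\langle e_1,\ldots,e_\ell\rangle$ in $\F_q^\ell$.

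Since the paper flags the proof as immediate, no genuine obstacle is expected. The only point deserving care is bookkeeping: keeping track of the identifications $\langle e_1,\ldots,e_\ell\rangle\cong\F_q^\ell$ so that the recipe recovering $(\F_q^n,\rho_c)$ from $(\F_q^\ell,\rho_c)$ via the displayed formula is unambiguous.
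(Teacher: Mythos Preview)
Your argument is correct and matches the paper's approach: the paper simply declares the proof ``immediate,'' and your one-line observation that $\colsp(M)\subseteq V$ forces $\colsp(M)\subseteq J \Leftrightarrow \colsp(M)\subseteq J\cap V$ is exactly the intended reasoning. The subsequent bookkeeping about $\varphi$ and the two $q$-polymatroids is also correct and spells out what the paper leaves implicit.
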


The next theorem is the main result of this section. We compute the rank functions of the $q$-polymatroids associated to a quasi-optimal anticode.

\begin{theorem} \label{thq-poly}
Let $h\geq 0$, $0\leq k\leq m$, and $0\leq s\leq m-h$.
Let $V=\langle e_1,\ldots,e_s\rangle, V'=\langle e_1,\ldots,e_{s+h}\rangle\subseteq\F_q^n$ and let $U=\langle e_1,\ldots,e_k\rangle\subseteq\F_q^m$. Let $\mathcal{C}=\Mat(V)+(\Mat(U)^t\cap\Mat(V'))\subseteq\mat$ be a rank-metric code. Let $J\subseteq\F_q^n$ and $K\subseteq\F_q^m$. 
The rank functions of the $q$-polymatroids associated to $\mathcal{C}$ are
$$\rho_r(\mathcal{C},K)=\frac{h(k-\dim(U\cap K^\perp)) +s\cdot\dim(K)}{n}$$
and 
$$\rho_c(\mathcal{C},J)=s-\dim(V\cap J^\perp)+\frac{k(h+\dim(V\cap J^\perp)-\dim(V'\cap J^\perp))}{m}.$$
\end{theorem}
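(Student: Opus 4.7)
The plan is to compute each $q$-polymatroid rank function directly from its definition by carefully decomposing the intersection $\mathcal{C}\cap\Mat(J^\perp)$ (resp.\ $\mathcal{C}\cap\Mat(K^\perp)^t$) into independent row/column constraints using the explicit block structure of $\mathcal{C}$. Note first that $\mathcal{C}=\Mat(V)+\Mat(U)^t\cap\Mat(V')$ coincides with the code $\mathcal{C}_{s,h,k}$ of Definition~\ref{specialcodes}, so a matrix $M\in\mathcal{C}$ has the block shape with a free $s\times m$ top block, an $h\times k$ middle-left block (the remaining columns being zero in rows $s+1,\ldots,s+h$), and a zero block in rows $s+h+1,\ldots,n$. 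In particular $\dim(\mathcal{C})=sm+hk$.

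For $\rho_r$, I would use the fact that $\Mat(K^\perp)^t$ is exactly the set of matrices whose rows all lie in $K^\perp$. Reading off the row constraints in the block form of $\mathcal{C}$, a matrix $M\in\mathcal{C}$ lies in $\Mat(K^\perp)^t$ if and only if its first $s$ rows are arbitrary vectors of $K^\perp$, its rows $s+1,\ldots,s+h$ lie in $U\cap K^\perp$, and its remaining rows (already forced to be zero) impose no further condition. Since these row constraints are independent,
\[
\dim(\mathcal{C}\cap\Mat(K^\perp)^t)=s\,\dim(K^\perp)+h\,\dim(U\cap K^\perp)=s(m-\dim K)+h\,\dim(U\cap K^\perp).
\]
Substituting into $\rho_r(\mathcal{C},K)=(\dim\mathcal{C}-\dim(\mathcal{C}\cap\Mat(K^\perp)^t))/n$ and simplifying yields the claimed formula.

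For $\rho_c$, I would argue dually, exploiting that $\Mat(J^\perp)$ is the set of matrices whose columns all lie in $J^\perp$. In the block form of $\mathcal{C}$, each of the first $k$ columns is an arbitrary vector in $V'=\langle e_1,\ldots,e_{s+h}\rangle$, while each of the last $m-k$ columns is an arbitrary vector in $V=\langle e_1,\ldots,e_s\rangle$. Intersecting with $\Mat(J^\perp)$ independently constrains each column to lie additionally in $J^\perp$, giving
\[
\dim(\mathcal{C}\cap\Mat(J^\perp))=k\,\dim(V'\cap J^\perp)+(m-k)\,\dim(V\cap J^\perp).
\]
Plugging this into $\rho_c(\mathcal{C},J)=(\dim\mathcal{C}-\dim(\mathcal{C}\cap\Mat(J^\perp)))/m$ and rearranging (using $\dim\mathcal{C}=sm+hk$ and collecting the $V\cap J^\perp$ and $V'\cap J^\perp$ contributions) produces the stated expression.

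The whole argument is essentially a direct bookkeeping once one observes that in each case the intersection splits as an independent product of constraints, one per row (for $\rho_r$) or one per column (for $\rho_c$). The only place where one has to be attentive is keeping track of which rows/columns carry which subspace constraint in the block shape of $\mathcal{C}$; there is no real obstacle, only careful accounting.
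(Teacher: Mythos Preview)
Your argument is correct. Both you and the paper compute $\dim(\mathcal{C}\cap\Mat(J^\perp))$ and $\dim(\mathcal{C}\cap\Mat(K^\perp)^t)$ directly and then substitute into Definition~\ref{defpoly}, so the overall strategy is the same; however, the execution differs. The paper first invokes Lemma~\ref{lemma:reduction} to replace $J^\perp$ by $J^\perp\cap V'$, then chooses a complement $J'$ of $J^\perp\cap V$ inside $J^\perp\cap V'$ and uses Lemma~\ref{Ccapw} (which handles the case $J'\cap V=0$) to evaluate $\dim(\mathcal{C}\cap\Mat(J'))$. Your route bypasses both lemmas by reading the intersection off column by column (resp.\ row by row): since the $j$th column of any $M\in\mathcal{C}$ is an arbitrary element of $V'$ for $j\leq k$ and of $V$ for $j>k$, and these columns vary independently, the intersection with $\Mat(J^\perp)$ factors as $k\dim(V'\cap J^\perp)+(m-k)\dim(V\cap J^\perp)$. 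This is more elementary and avoids the auxiliary splitting $J^\perp\cap V'=(J^\perp\cap V)\oplus J'$, at the cost of being tied to the specific coordinate form of $\mathcal{C}_{s,h,k}$; the paper's formulation via Lemmas~\ref{Ccapw} and~\ref{lemma:reduction} is slightly more coordinate-free but ultimately rests on the same observation.
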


\begin{proof}
For any $J\subseteq\F_q^n$, one has \begin{equation}\label{eq0}
\C\cap\Mat(J)=\C\cap\Mat(J\cap V')
\end{equation} by Lemma \ref{lemma:reduction}, since $\C\subseteq\Mat(V')$. Write $J\cap V'$ as $(J\cap V)+J'$, where $J'\cap V=0$. This can always be done by letting $J'$ be the vector space generated by a set of vectors that, together with a basis of $J\cap V$, form a basis of $J\cap V'$. Then \begin{equation}\label{eq1}
\C\cap\Mat(J\cap V')=\C\cap\Mat(J\cap V+J')=\C\cap[\Mat(J\cap V)+\Mat(J')]=\Mat(J\cap V)+\C\cap\Mat(J').
\end{equation} Since $J'\cap V=0$, then 
\begin{equation}\label{eq2}
\dim(\Mat(J\cap V)+\C\cap\Mat(J'))=\dim(\Mat(J\cap V))+\dim(\C\cap\Mat(J')).
\end{equation}
Moreover $\dim(\C\cap\Mat(J'))=k\cdot\dim(J')$ by Lemma \ref{Ccapw}.
Combining (\ref{eq0}), (\ref{eq1}), and (\ref{eq2}) one gets $$\dim(\C\cap\Mat(J))=m\cdot\dim(J\cap V)+k\cdot\dim(J')=m\cdot\dim(J\cap V)+k(\dim(J\cap V')-\dim(J\cap V)).$$
Therefore $$\rho_c(\mathcal{C},J)=\frac{\dim(\C)-\dim(\C\cap\Mat(J^\perp))}{m}=s-\dim(V\cap J^\perp)+\frac{k(h+\dim(V\cap J^\perp)-\dim(V'\cap J^\perp))}{m}$$ as claimed.
The other equality is proved similarly.
\begin{comment}
After identifying $\Mat(V')$ with $\F_q^{(s+h)\times m}$ by deleting the last $n-s-h$ rows of each matrix, we may regard $\C$ as a subset of $\F_q^{(s+h)\times m}$. By Lemma \ref{lemma:reduction} the $q$-polymatroid of the rows is left unchanged by this operation.
Write $K$ as $(K\cap U)+K'$, where $K'\cap U=0$. This can always be done by letting $K'$ be the vector space generated by a set of vectors that, together with a basis of $K\cap U$, form a basis of $K$. Then 
$$\C\cap\Mat(K)^t=\C\cap\Mat(K\cap U+K')^t=\C\cap[\Mat(K\cap U)^t+\Mat(K')^t]=\Mat(K\cap U)^t+\C\cap\Mat(K')^t,$$
where the last equality follows from $\C\supseteq\Mat(K\cap U)^t$.
Since $K'\cap U=0$, then 
$$\dim(\Mat(K\cap U)^t+\C\cap\Mat(K')^t)=\dim(\Mat(K\cap U)^t)+\dim(\C\cap\Mat(K')^t).$$
Moreover $\dim(\C\cap\Mat(K')^t)=s\cdot\dim(K')$ by Lemma \ref{Ccapw}.
Combining the equalities above, one gets $$\dim(\C\cap\Mat(K)^t)=(s+h)\dim(K\cap U)+s\cdot\dim(K')=(s+h)\dim(K\cap U)+s(\dim(K)-\dim(K\cap U)).$$
Therefore $$\rho_r(\mathcal{C},K)=\frac{\dim(\C)-\dim(\C\cap\Mat(K^\perp))}{n}=\frac{s\dim(K)+hk-h\dim(K^\perp\cap U)}{n}$$ as claimed.
\end{comment}
\end{proof}

\end{document}